\documentclass[11pt]{article}

\usepackage{hyperref}

\usepackage{amsmath}
\usepackage{amsthm}
\usepackage{amssymb}
\usepackage{graphicx}
\usepackage{color}
\usepackage{tikz}
\usepackage{wrapfig}
\usepackage{mathtools}
\usepackage{bbm}
\usepackage{graphicx} 
\usepackage{caption,subcaption}
\usepackage{adjustbox} 
\usepackage{rotating}
\usepackage{mathrsfs}

\usepackage{algorithmic}

\usepackage{dsfont} 

\usepackage[margin=1in]{geometry}
\usepackage{setspace}

\newcommand{\indep}{\mathbin{\rotatebox[origin=c]{90}{$\models$}}}

\newcommand{\E}{\mathbb{E}}

\newcommand{\indicator}[1]{\mathds{1}\{ #1 \}}

\newcommand{\X}{\mathbf{X}}

\newcommand{\C}{\mathbf{C}}

\newcommand{\x}{\mathbf{x}}

\newcommand{\bL}{\mathbf{L}}
\newcommand{\bl}{\mathbf{l}}

\DeclareMathOperator{\DE}{DE}
\DeclareMathOperator{\IDE}{IDE}
\DeclareMathOperator{\AR}{AR}
\DeclareMathOperator{\VE}{VE}
\DeclareMathOperator{\CVE}{CVE}
\DeclareMathOperator{\CE}{CE}
\DeclareMathOperator{\SE}{SE}
\DeclareMathOperator{\IE}{IE}
\DeclareMathOperator{\SAR}{SAR}
\DeclareMathOperator{\I}{I}
\DeclareMathOperator{\net}{net}

\newcommand\redsout{\bgroup\markoverwith{\textcolor{red}{\rule[0.5ex]{2pt}{0.4pt}}}\ULon}

\newtheorem{thm}{Theorem}
\newtheorem{lem}{Lemma}
\newtheorem{defn}{Definition}
\newtheorem*{defn*}{Definition}

\newtheorem{cor}{Corollary}

\newtheorem{assumption}{Assumption}

\usepackage[numbers,sort&compress]{natbib}

\setlength{\parskip}{1em}
\setlength{\parindent}{0.0pt}
\setlength{\bibsep}{0pt plus 0.3ex}


\title{Identification of causal intervention effects under contagion}

\author{Xiaoxuan Cai$^1$, Wen Wei Loh$^2$, and Forrest W. Crawford$^{1,3,4,5}$ \\[1em]
 \normalsize 1. Department of Biostatistics, Yale School of Public Health \\
 \normalsize 2. Department of Data Analysis, University of Ghent \\
 \normalsize 3. Department of Statistics \& Data Science, Yale University \\
 \normalsize 4. Department of Ecology and Evolutionary Biology, Yale University \\
 \normalsize 5. Yale School of Management}


\begin{document}
\maketitle

\begin{abstract} 
\noindent Defining and identifying causal intervention effects for transmissible infectious disease outcomes is challenging because a treatment -- such as a vaccine -- given to one individual may affect the infection outcomes of others.  Epidemiologists have proposed causal estimands to quantify effects of interventions under contagion using a two-person partnership model.  These simple conceptual models have helped researchers develop causal estimands relevant to clinical evaluation of vaccine effects. However, many of these partnership models are formulated under structural assumptions that preclude realistic infectious disease transmission dynamics, limiting their conceptual usefulness in defining and identifying causal treatment effects in empirical intervention trials.  In this paper, we propose causal intervention effects in two-person partnerships under arbitrary infectious disease transmission dynamics, and give nonparametric identification results showing how effects can be estimated in empirical trials using time-to-infection or binary outcome data. The key insight is that contagion is a causal phenomenon that induces conditional independencies on infection outcomes that can be exploited for the identification of clinically meaningful causal estimands. These new estimands are compared to existing quantities, and results are illustrated using a realistic simulation of an HIV vaccine trial.  \\[1em]
  \textbf{Keywords:} infectiousness, interference, mediation, susceptibility, transmission, vaccine
\end{abstract}


\section{Introduction}

Estimating the causal effect of an intervention can be challenging when the outcome of interest is contagious \citep{ogburn2018challenges}.  For example, a vaccine intended to prevent infection by a transmissible disease may reduce the risk of infection in individuals who receive it, and may reduce transmissibility if a vaccinated individual becomes infected.  When study subjects are embedded in interacting groups among whom the disease may be transmitted, it can be difficult to separate the effect of one subject's vaccination on themselves from its effect on other individuals and the group as a whole.  Usually, the estimand of greatest clinical interest is the effect of an intervention on individual risks of infection, holding all else constant. 

The pursuit of empirically meaningful definitions of population-level causal vaccine effects has a long history. \citet{greenwood1915statistics} first described informally the conditions under which vaccine effects can be estimated. \citet{halloran1991direct} established some of the first theory and definitions for clinically meaningful vaccine effects, and subsequent work by 
Halloran and colleagues \citep{halloran1991study, halloran1997study, halloran1999design} described epidemiological study designs for identifying these quantities. \citet{halloran1995causal} gave the first formal definitions of causal vaccine estimands using notation and assumptions of a modern counterfactual-based causal inference framework \citep{rubin2005causal}. \citet{hudgens2008toward} and \citet{perez2014assessing} showed how this formalism could be applied in empirical randomized trials of clustered individuals \citep{halloran2010design,halloran2016dependent}.  More recently, researchers have shown that randomized trials may not measure clinically meaningful intervention effects when infection can be transmitted within groups \citep{morozova2018risk,eck2019randomization}.  

Researchers have described two-person partnership models of infectious disease transmission for defining more granular, or individual, causal intervention effects.  \citet{vanderweele2011bounding} introduced a partnership model consisting of two interacting individuals who may be vaccinated and can transmit the infection to each other.  By limiting the extent of potential disease transmission to two individuals, effects can be more easily defined in terms of
potential outcomes indexed by treatments of both individuals and the outcome of their partner. Using a principal stratification approach, the partnership
model permits computations of bounds for the infectiousness effect \citep{vanderweele2011bounding,chiba2012note,halloran2012causal}.

\citet{vanderweele2012components} presented a special case of the partnership model in which one partner is home-bound, and can only be infected from within via infection of the other. The assumed asymmetry in the disease transmission structure makes this model tractable for point identification of contagion and infectiousness effects by ensuring that interference only happens in one direction.
Interference arises when an individual's potential outcomes depend on the treatment status of others \citep{Cox:1958aa}. 
To allow for mutual dependence of potential outcomes on each other's treatments, \citet{ogburn2014causal,ogburn2017vaccines} extend this approach using causal diagrams and mediation arguments for symmetric pairs. More recent work on symmetric mediation provides new tools for accommodating structural assumptions about the nature of dependence in outcomes under different forms of interference \citep{shpitser2017modeling,sherman2018identification,ogburn2018causal,bhattacharya2019causal}.  

Statisticians and epidemiologists have developed parallel literature devoted to mathematical modeling of infectious disease transmission dynamics.  This work treats infectious disease transmission as a dynamic temporal phenomenon: the risk of infection in a given subject may change over time, as a function of the infection status of their contacts, and covariates.  For example, \citet{rhodes1996counting} present hazard models of infectious disease transmission in groups that accommodate individual-level (e.g. treatment) variables with possibly different effects on susceptibility and infectiousness.  \citet{kenah2013non,kenah2015semi} extends these ideas to develop nonparametric and semi-parametric statistical models for estimating covariate effects under contagion.  Structural transmission modeling has gained wide use in clinical studies of infectious disease dynamics because it combines mechanistic assumptions about infectious disease transmission with regression-style covariate adjustment \citep{auranen2000transmission,oneill2000analyses,becker2003estimating,cauchemez2004bayesian,cauchemez2006investigating,cauchemez2009household,becker2006estimating,yang2006design,tsang2015influenza,tsang2016individual}. 

In this paper, we take a different approach to defining and identifying causal identification for intervention effects in general symmetric two-person partnerships under contagion by formalizing the role of time in infectious disease transmission from a causal perspective.  In our construction, either individual can be vaccinated, can be infected from outside, and can infect the other if infected themselves. Treatments and individual covariates may affect both susceptibility to, and infectiousness of, the infection outcome. We first introduce a generic causal model and straightforward assumptions that permit identification of time-controlled and marginal contagion, susceptibility, and infectiousness effects.  The framework is nonparametric and imposes no restrictions on the joint distribution of infection times in a partnership. Our method generalizes those that treat infection outcomes as simultaneous mediating variables \citep{vanderweele2011bounding,chiba2012note,halloran2012causal}.  Before any infections have occurred in a partnership, the potential first infection times are conditionally independent, because neither partner can yet transmit the infection to the other.  
After the first infection, the causal structure of the system changes, and the time to infection of the remaining susceptible partner is now a function their partner's, as well as their own, treatment and covariates. Because the resulting causal model incorporates this temporally changing structure, it is more complex than those presented by other authors. On the other hand, this added complexity yields straightforward point identification results that cannot be obtained by treating the infection outcomes of both individuals as simultaneous mediating variables.  We discuss nonparametric identification under randomization and in observational settings, compare these estimands to existing quantities proposed by other authors, and conduct a simulation analysis of a hypothetical HIV vaccine trial to illustrate the estimands.


\section{Setting}

Consider a population consisting of pairs of individuals, henceforth referred to as partnerships. Within a partnership, either individual can be infected from an external source
(exogenous to the partnership), and once infected, an individual may internally (endogenous to the partnership) transmit the infection to their uninfected partner. In line with existing partnership models, it is assumed throughout that transmission between partnerships does not occur.  Label the individuals in the partnership 1 and 2. In a given partnership, let $T_i$ be the infection time of person $i$, and let $Y_i(t) = \indicator{T_i<t}$ be the indicator of prior infection at time $t$.  Let $X_i$ be the binary treatment status of $i$, and let $\X=(X_1,X_2)$ be the joint binary treatment vector for the partnership. Let $\bL=(\mathbf{L}_1,\mathbf{L}_2)$ be measured baseline covariates for the two individuals, including covariates for the partnership as a whole.  In each partnership, we observe $(T_1,T_2,X_1,X_2,\bL_1,\bL_2)$.  In a symmetric partnership, labels 1 and 2 are arbitrary and interchangeable. We will use the index $i$ to refer generically to one individual, either 1 or 2, and $j$ to refer to the partner of $i$. 

To describe the causal structure of infectious disease transmission within a partnership, we consider a decomposition of the infection time $T_i$ that will help us define counterfactual infection times under different circumstances.  Recall that both individuals are uninfected at baseline, and let $W_i$ be the time to initial infection of $i$ from a source of infection external to the partnership.  If $i$ is the first in their partnership to become infected, then we observe $W_i$.  If their partner $j$ is infected first, we observe $W_j=w_j$ and $W_i$ is \emph{censored} at time $w_j$.  When $W_i$ is censored by earlier infection of $j$, let $Z_i$ be the additional time to infection of $i$, beyond $w_j$. Formally, we decompose $T_i$ as follows.
\begin{equation}
  T_i =   \left\{ \begin{array}{ll}
               W_i & \textrm{if }W_i < W_j \\
               W_j + Z_i & \text{otherwise}.
           \end{array} \right.
 \label{eq:Tdef}
\end{equation}
We emphasize that the decomposition \eqref{eq:Tdef} is purely notational, and places no \emph{a priori} restrictions on the joint distribution of infection times $(T_i,T_j)$.  Rather, \eqref{eq:Tdef} permits specification of causal assumptions, outlined below, to capture the way treatments to both $i$ and $j$ may affect different parts of the waiting times to infection.  Figure \ref{fig:illustration} illustrates this decomposition and motivates the contagion effect presented in Definition \ref{defn:contagion} below: the disease is said to be ``contagious'' if the distribution of $T_i$ is different from that of $W_i$, or equivalently, if prior infection of $j$ changes the conditional distribution of the remaining time to infection of $i$.

\begin{figure}
\centering
\includegraphics{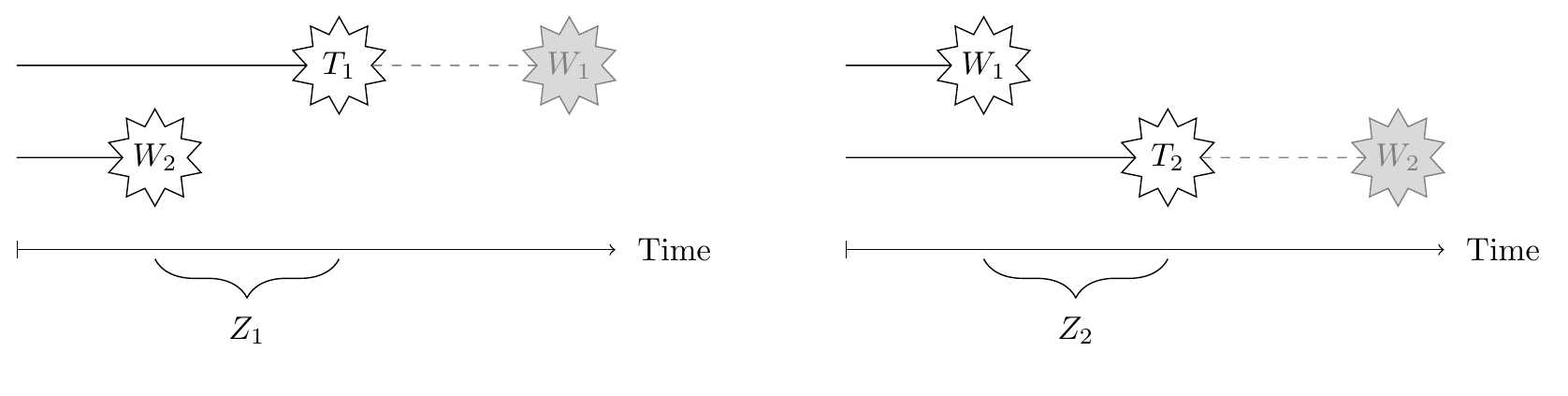}
\caption{Illustration of contagion in a two-person partnership. At left, when subject 2 becomes infected first ($W_2<W_1$), then $W_1$ is censored, and $Z_1$ is the remaining time to infection of subject 1. At right, when subject 1 is infected first ($W_1<W_2$), then $W_2$ is censored, and $Z_2$ is the remaining time to infection of subject 2.  Informally, the outcome is said to be ``contagious'' when the distribution of $T_i$ is different from that of $W_i$.} 
  \label{fig:illustration}
\end{figure}


\subsection{Assumptions}

In this section, we describe assumptions that are sufficient to identify scientifically meaningful and relevant causal effects, which are described later, from observable infection time data for each partnership.
We state assumptions for a generic individual $i$ and their partner $j$.  To define potential, or counterfactual, infection times for individual $i$, let $W_i(\x)$ be the potential value of $W_i$ under the joint treatment allocation $\x=(x_1,x_2)$. Let $Z_i(w_j,\x)$ be the additional potential time to infection of $i$, following the infection of $j$ at time $W_j(\x)=w_j$, under joint treatment allocation $\x$. 
\begin{assumption}[Exclusion restriction and independence of the initial infection]
$W_i(\x) = W_i(x_i)$,
$W_i(x_i) \indep W_j(x_j) \mid \bL$, and  
$W_i(x_i) \indep \bL_j \mid \bL_i$, for all $\x$.
\label{as:exclusion}
\end{assumption}
Assumption \ref{as:exclusion} states that individual $i$'s initial infection time $W_i(\x)$ is invariant to the partner's treatment status $x_j$, and may be viewed as a ``no-interference'' assumption on $W_i$, since $W_i$ is the initial infection time from an external source, which can only be realized when $W_i$ precedes $W_j$.
Further, $W_i(x_i)$ is independent of $W_j(x_j)$ given covariates $\bL$. 
\begin{assumption}[Initial infection exchangeability]
$Z_i(w_j,\mathbf{x}) \indep W_j(x_j) \mid \bL$, for all $\x$, $w_j>0$.
\label{as:infignorability}
\end{assumption}
Assumption \ref{as:infignorability} states that there is sufficient covariate information in $\bL$ so that the potential further time to infection $Z_i(w_j,\x)$ when $j$ is infected at $w_j$ is conditionally independent of the potential initial infection time $W_j(x_j)$ of $j$.
While this assumption bears similarity to the assumption of no unobserved confounding between the mediator and outcome counterfactuals (for the same individual) under the single mediator setting \citep{robins1992identifiability,pearl2001direct},
we note that this assumption relates to counterfactuals for different individuals.
\begin{assumption}[Treatment exchangeability] 
$W_i(x_i) \indep \X \mid \bL$ and $Z_i(w_j,\x) \indep \X \mid \bL$, for all $\x$, $w_j>0$.
\label{as:txignorability}
\end{assumption}
Assumption \ref{as:txignorability} means that the potential infection times $W_i(x_i)$ and $Z_i(w_j,\x)$ are independent of the assigned treatment $\X$ given covariates. This assumption resembles the conventional criterion of ``no unobserved confounding'' between treatment and outcome; in this context, Assumption \ref{as:txignorability} states that there is no unmeasured confounding between two infection times and the joint treatment vector in the same partnership.

Two additional assumptions ensure identifiability of potential infection outcomes from observational data. 
\begin{assumption}[Consistency]
$W_i= W_i(x_i)$, and $Z_i=Z_i(w_j,\x)$ under the observed treatment $\X=\x$ and infection time $W_j=w_j$, for all $\x$, $w_j>0$.
\label{as:consistency}
\end{assumption}
\begin{assumption}[Positivity]
$0<\Pr(W_j<w|X_i=x_i,\bL_i=\bl_i)<1$ for all $w>0$, $x_i$, and $\bl_i$; 
$0<\Pr(Z_j<z|\X=\x_i,\bL=\bl)<1$ for all $z>0$, $\x$ and $\bl$; 
and
$0<\Pr(\X=\x|\bL=\bl)<1$ for all $\bl$. 
\label{as:positivity}
\end{assumption}

A final assumption permits identification of certain ``natural'' or ``cross-world'' potential infection outcomes. 
\begin{assumption}[Cross-world initial infection exchangeability]
  $Z_i(w_j,\x) \indep W_j(x_j') \mid \bL$ when $\x=(x_i,x_j)$ and $x_j' \neq x_j$, for all $\x$, $x'_j$.
  \label{as:crossworld}
\end{assumption}

Finally, let $T_i(W_j(x_j),\x)$ be the potential outcome for the infection time of subject $i$, when $j$ is infected at time $W_j(x_j)$ and the assigned treatments are $\x=(x_i,x_j)$. Following the decomposition \eqref{eq:Tdef} and by Assumptions \ref{as:exclusion}--\ref{as:txignorability}, we can construct the potential infection time $T_i(W_j(x_j),\x)$ as follows:
\begin{equation}
  T_i(W_j(x_j),\x) =   \left\{ \begin{array}{ll}
               W_i(x_i) & \textrm{if } W_i(x_i) < W_j(x_j) \\
               W_j(x_j) + Z_i(W_j(x_j),\mathbf{x}) & \text{otherwise}.
           \end{array} \right. 
 \label{eq:Tdef2}
\end{equation}
The potential infection time with $W_j=w_j$ fixed is denoted as $T_i(w_j,\x)$.  For convenience, define the binary potential infection outcome evaluated at time $t$, $Y_i(t;w_j,\x)=\indicator{T_i(w_j,\x)<t}$ and $Y_i(t;W_j(x_j'),\x) = \indicator{T_i(W_j(x_j'),\x)<t}$. 

The potential infection time decomposition \eqref{eq:Tdef2} formalizes intuition about the structure of interference under contagion: there can be no interference without prior infection.  When neither $i$ nor $j$ is infected, the time to infection of $i$ is solely a function of the treatment $x_i$, and there is no interference within the partnership. This is because the treatment $x_j$ of $j$ can only affect $i$ \emph{after} $j$ becomes infected.  When $j$ is the first to be infected, the remaining time to infection of $i$ is now a function of both $x_i$ and $x_j$, because $j$ has now gained the ability to transmit to $i$.  This apparent complexity simplifies identification of causal effects, as we show below.s


\section{Identification of potential infection outcomes}

We wish to non-parametrically identify the average potential infection outcome $\E[Y_i(t;w_j,\x)]$ using observations of pairwise outcomes, treatments, and covariates $(T_i,T_j,X_i,X_j,\bL_i,\bL_j)$. A preliminary result identifies the distribution of $W_i(x_i)$ in Lemma~\ref{lem:Fi} using information about infection times. 

\begin{lem}
Suppose Assumptions \ref{as:exclusion}-\ref{as:positivity} hold. Then the distribution function of $W_i(x_i)$ given $\bL_i=\bl_i$ is identified by 
\[ F_i(w|x_i,\mathbf{l}_i) =  1-\exp\left[-\int_0^{w} \frac{p(T_i=u,T_j>u|\X=(x_i,x_j),\mathbf{L}=(\mathbf{l}_i,\mathbf{l}_j))}{\Pr(T_i>u,T_j>u|\X=(x_i,x_j),\mathbf{L}=(\mathbf{l}_i,\mathbf{l}_j))} du\right]  \]
for any fixed values of $x_j$, and $\mathbf{l}_j$, where $p(T_i=u,T_j>u|\X=(x_i,x_j),\mathbf{L}=(\mathbf{l}_i,\mathbf{l}_j))$ is the joint probability density of $T_i$ and survivor function of $T_j$. 
\label{lem:Fi}
\end{lem}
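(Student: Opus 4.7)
The plan is to recognize the integrand as the conditional hazard of the latent initial infection time $W_i(x_i)$ given $\bL_i = \bl_i$, and then to invert the standard hazard-to-CDF relation $F_i(w) = 1 - \exp\!\bigl(-\int_0^w \lambda_i(u)\,du\bigr)$.

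First, I would translate the observable events on $(T_i,T_j)$ into events on the latent pair $(W_i(x_i), W_j(x_j))$ using the decomposition \eqref{eq:Tdef} and consistency (Assumption \ref{as:consistency}). The key observation is that $\min(T_i,T_j) = \min(W_i,W_j)$ almost surely under \eqref{eq:Tdef}, so
\[
\{T_i > u,\, T_j > u\} = \{W_i > u,\, W_j > u\}.
\]
For the numerator, the sub-density event $\{T_i = u,\, T_j > u\}$ requires $T_i < T_j$, forcing $i$ to be infected first from outside; hence $T_i = W_i$ and the conditions collapse to $\{W_i = u,\, W_j > u\}$.

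Second, I would factor these joint probabilities using the conditional independencies in Assumption \ref{as:exclusion} together with treatment exchangeability (Assumption \ref{as:txignorability}). After conditioning on $\X=(x_i,x_j)$ and $\bL=(\bl_i,\bl_j)$, these give
\begin{align*}
\Pr(T_i > u, T_j > u \mid \X, \bL) &= S_i(u \mid x_i, \bl_i)\, S_j(u \mid x_j, \bl_j), \\
p(T_i = u, T_j > u \mid \X, \bL) &= f_i(u \mid x_i, \bl_i)\, S_j(u \mid x_j, \bl_j),
\end{align*}
where $S_i(u \mid x_i, \bl_i)$ and $f_i(u \mid x_i, \bl_i)$ denote the conditional survival and density of $W_i(x_i)$ given $\bL_i = \bl_i$.

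Third, the ratio cancels the $S_j$ factor -- which also explains why the displayed expression is invariant to the fixed choice of $x_j$ and $\bl_j$ -- leaving the conditional hazard $\lambda_i(u \mid x_i, \bl_i) = f_i(u \mid x_i, \bl_i)/S_i(u \mid x_i, \bl_i)$. Integrating from $0$ to $w$ and invoking the standard identity $S_i(w \mid x_i, \bl_i) = \exp\!\bigl(-\int_0^w \lambda_i(u \mid x_i, \bl_i)\,du\bigr)$ delivers the claimed formula for $F_i(w \mid x_i, \bl_i) = 1 - S_i(w \mid x_i, \bl_i)$. Positivity (Assumption \ref{as:positivity}) ensures all conditional quantities are well-defined throughout.

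The main obstacle is the first step: the event $\{T_j > u\}$ does not directly restrict $W_j$, because when $i$ is infected first we have $T_j = W_i + Z_j$ rather than $W_j$. I would handle this infinitesimally in $du$: on the sub-density event $\{T_i \in [u, u+du),\, T_j > u\}$, the constraint $T_j > u$ becomes $Z_j > u - W_i$, which is automatic as $du \to 0$ under continuity of $Z_j$, so the residual $Z_j$ constraint vanishes and the event collapses to $\{W_i = u,\, W_j > u\}$ as claimed. Once this event-level identity is in hand, the remaining algebra is routine.
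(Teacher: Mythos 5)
Your proposal is correct and follows essentially the same route as the paper's proof: both arguments reduce the observable ratio to the hazard of $W_i(x_i)$ by cancelling the partner's survivor function under the conditional independence of Assumption \ref{as:exclusion}, and then invert the standard hazard--CDF relation; the paper simply runs the algebra in the forward direction (multiplying and dividing $f_i/(1-F_i)$ by $1-F_j$) whereas you work backward from the observable events. Your explicit justification of the event identity $\{T_i=u,\,T_j>u\}=\{W_i=u,\,W_j>u\}$ is a point the paper passes over silently, and it is handled correctly.
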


Lemma \ref{lem:Fi} is a standard distributional identification result in competing risks \citep{akritas2004nonparametric}; in this paper, $W_i$ and $W_j$ are competing event times within the same partnership. The identified distribution function $F_i(w|x_i,\bl_i)$ is a function of $x_i$ and $\bl_i$ only, and is invariant to values of $x_j$ and $\bl_j$. However, in order to identify this function, particular values of $x_j$ and $\bl_j$ must be held constant.

The main result shows that average exposure-controlled potential infection outcomes given $\bL=\bl$ are identified. Proofs are given in the Appendix.  
\begin{thm}[Identification of the average exposure-controlled potential infection outcome]
Suppose Assumptions  \ref{as:exclusion}-\ref{as:positivity} hold and $\x=(x_1,x_2)$. For fixed values of $w_j$ and $t$, if $w_j < t$, 
\begin{equation}
  \E[Y_i(t;w_j,\x)|\bL=\bl] = F_i(w_j|x_i,\mathbf{l}_i) + (1-F_i(w_j|x_i,\mathbf{l}_i)) \E[Y_i(t)|T_i \ge w_j, T_j=w_j,\X=\x,\mathbf{L}=\mathbf{l}] 
  \label{eq:id}
\end{equation}
otherwise, if $t \le w_j$, $\E[Y_i(t;w_j,\x,\mathbf{L}=\mathbf{l})] = F_i(t|x_i,\mathbf{l}_i)$. 
\label{thm:id}
\end{thm}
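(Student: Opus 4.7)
The plan is to compute $\Pr(T_i(w_j, \x) < t \mid \bL = \bl)$ by plugging in the potential-outcome decomposition \eqref{eq:Tdef2} and partitioning the event $\{T_i(w_j, \x) < t\}$ according to whether $W_i(x_i) < w_j$ (where $T_i(w_j, \x) = W_i(x_i)$) or $W_i(x_i) \ge w_j$ (where $T_i(w_j, \x) = w_j + Z_i(w_j, \x)$). The piece involving only $W_i(x_i)$ will be handled by Lemma \ref{lem:Fi} together with the $W_i(x_i) \indep \bL_j \mid \bL_i$ clause of Assumption \ref{as:exclusion}, which replaces $\Pr(W_i(x_i) < w \mid \bL)$ by $F_i(w \mid x_i, \bl_i)$. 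The piece involving $Z_i(w_j, \x)$ must be matched to the observable $\E[Y_i(t) \mid T_i \ge w_j, T_j = w_j, \X = \x, \bL = \bl]$ on the right-hand side of \eqref{eq:id}.

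For $t \le w_j$, the inequality $T_i(w_j, \x) < t$ forces the first branch of \eqref{eq:Tdef2}, so $\{T_i(w_j, \x) < t\} = \{W_i(x_i) < t\}$ and the identification reduces immediately to $F_i(t \mid x_i, \bl_i)$. For $w_j < t$, the first branch contributes $\Pr(W_i(x_i) < w_j \mid \bL) = F_i(w_j \mid x_i, \bl_i)$, and the second branch contributes $\Pr(W_i(x_i) \ge w_j, Z_i(w_j, \x) < t - w_j \mid \bL)$, which I factor as
\[ (1 - F_i(w_j \mid x_i, \bl_i)) \cdot \Pr(Z_i(w_j, \x) < t - w_j \mid W_i(x_i) \ge w_j, \bL). \]

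The main obstacle is to identify the last factor with the observable conditional expectation in \eqref{eq:id}. By consistency (Assumption \ref{as:consistency}), conditioning on $\{\X = \x, T_j = w_j, T_i \ge w_j\}$ is equivalent to conditioning on $\{\X = \x, W_j(x_j) = w_j, W_i(x_i) \ge w_j\}$, on which event $Y_i(t) = \indicator{Z_i(w_j, \x) < t - w_j}$, so that
\[ \E[Y_i(t) \mid T_i \ge w_j, T_j = w_j, \X = \x, \bL] = \Pr(Z_i(w_j, \x) < t - w_j \mid W_j(x_j) = w_j, W_i(x_i) \ge w_j, \X = \x, \bL). \]
To strip the extraneous conditioning on $W_j(x_j) = w_j$ and $\X = \x$, I invoke Assumption \ref{as:infignorability} ($Z_i(w_j, \x) \indep W_j(x_j) \mid \bL$) and Assumption \ref{as:txignorability} ($Z_i(w_j, \x) \indep \X \mid \bL$), applied jointly with the $\bL$-conditional independence of $W_i(x_i)$ from $W_j(x_j)$ (Assumption \ref{as:exclusion}) and from $\X$ (Assumption \ref{as:txignorability}); this allows both extraneous events to be dropped from the conditioning set, leaving the desired factor $\Pr(Z_i(w_j, \x) < t - w_j \mid W_i(x_i) \ge w_j, \bL)$. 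Positivity (Assumption \ref{as:positivity}) guarantees that the conditional probabilities involved are well defined throughout. Assembling the two pieces then yields \eqref{eq:id}.
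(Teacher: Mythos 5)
Your proof is correct and follows essentially the same route as the paper's: the same partition on whether $W_i(x_i)$ precedes $w_j$, the first piece identified via Lemma \ref{lem:Fi} and the exclusion restriction, and the second piece matched to the observable conditional expectation via consistency together with Assumptions \ref{as:infignorability} and \ref{as:txignorability}. The only cosmetic difference is direction: the paper first inserts the conditioning on $\{W_j=w_j,\X=\x\}$ (via its auxiliary lemma on outcome independence) and then decomposes, whereas you decompose the counterfactual first and strip the extraneous conditioning at the end; the same independence assumptions do the same work either way.
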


The structure of \eqref{eq:id} shows that the average controlled potential infection outcome $Y_i(t;w_j,\x)$ given $\bL=\bl$ is equal to the probability that $i$ is infected before $w_j$, plus the average outcome of $i$ when $j$ is infected at $w_j$ and $i$ is not yet infected, times the probability of this event. Figure \ref{fig:dag_swig} shows a causal graphical model \citep{pearl2009causality} of the system outlined by Assumptions 1-5, and the corresponding single-world intervention graph \citep{richardson2013single,richardson2013single2} for identification of the exposure-controlled potential infection outcome.

\begin{figure}
\centering
\includegraphics[width=\textwidth]{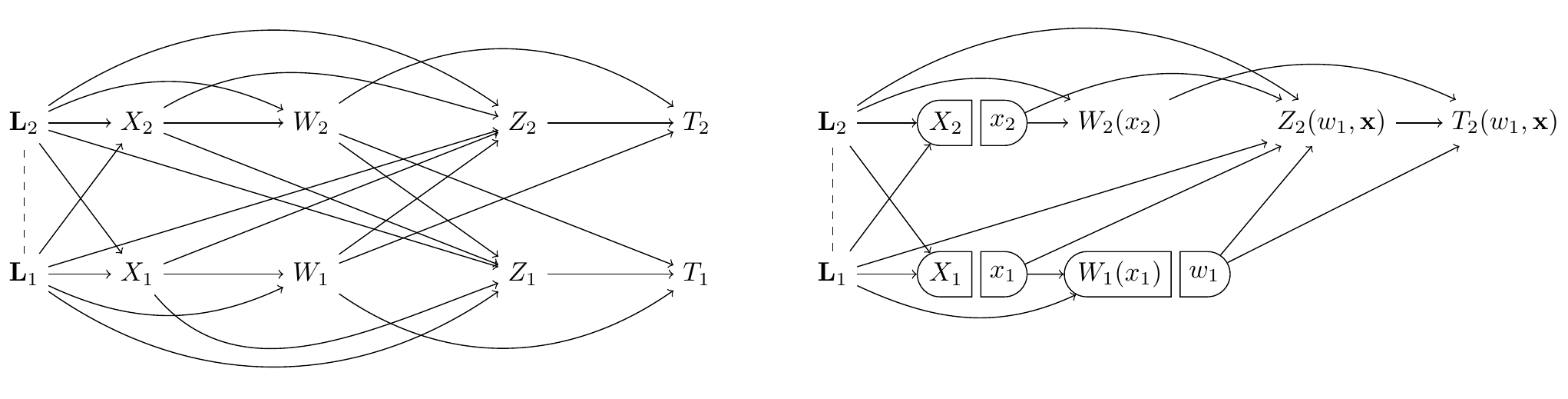}
\caption{Causal graphical model for infection outcomes in a two-person partnership, under Assumptions \ref{as:exclusion}--\ref{as:positivity}. At left, covariates $\bL_1$ and $\bL_2$ may be dependent within partnerships, and covariates of both subjects may affect the joint treatments $(X_1,X_2)$. The initial infection times $W_1$ and $W_2$ are functions of individual covariates and treatments alone by Assumption \ref{as:exclusion}, and thus no arrows exist from $X_j$ to $W_i$ or from $L_j$ to $W_i$. Subsequent infection times $Z_1$ and $Z_2$ are functions of treatments and covariates of both subjects, and the infection time of the first infected subject. From the decomposition of the infection time \eqref{eq:Tdef}, the infection time $Z_i$ is not a function of the (possibly latent) time $W_i$.  The overall infection time $T_i$ is determined by $W_i$, $W_j$ and $Z_i$, as specified in \eqref{eq:Tdef}. At right, the single-world intervention graph illustrates identification of $Y_1(t;w_2,\x)$ under the intervention in which $W_j=w_j$ and $\X=(x_1,x_2)$ are fixed. The variables $Z_2$ and $T_2$ are omitted since they are not relevant to generation of the outcome $T_1(w_1,\x)$.}
\label{fig:dag_swig}
\end{figure}

If we do not fix the infection time $W_j=w_j$, and instead allow it to take its ``natural'' value under a particular treatment to $j$, we obtain the marginal average potential infection outcome when $\bL=\bl$ as follows.

\begin{cor}[Identification of average natural potential infection outcome]
  Suppose Assumptions \ref{as:exclusion}-\ref{as:positivity} hold.  Then for $\x=(x_i,x_j)$, 
$\E[Y_i(t;W_j(x_j),\x)|\bL=\bl] = \E[Y_i(t)|\X=\x,\bL=\bl]$.  If in addition $x_j' \neq x_j$ and Assumption \ref{as:crossworld} holds,
\[ \E[Y_i(t;W_j(x_j'),\x)|\bL=\bl] = \int_0^t \E[Y_i(t;w_j,\x)|\bL=\bl] dF_j(w_j|x_j',\bl_j) . \]
where $F_j(w_j|x_j,\bl_j)$ is given by Lemma \ref{lem:Fi} and $\E[Y_i(t;w_j,\x)|\bL=\bl]$ by Theorem \ref{thm:id}.
\label{cor:idnatural}
\end{cor}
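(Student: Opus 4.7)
The plan is to prove the two identities separately. Part~1 follows from consistency (Assumption~\ref{as:consistency}) combined with treatment exchangeability (Assumption~\ref{as:txignorability}), while Part~2 is an iterated-expectation argument using cross-world initial infection exchangeability (Assumption~\ref{as:crossworld}), Lemma~\ref{lem:Fi}, and Theorem~\ref{thm:id}.

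For the first identity, I would observe that on the event $\{\X=\x\}$, the potential-outcome decomposition \eqref{eq:Tdef2} is realized pathwise as the observational decomposition \eqref{eq:Tdef} by Assumption~\ref{as:consistency}: $W_i(x_i)=W_i$, $W_j(x_j)=W_j$, and (when $W_j < W_i$) $Z_i(W_j(x_j),\x)=Z_i$. Hence $Y_i(t;W_j(x_j),\x)=Y_i(t)$ almost surely on $\{\X=\x\}$, giving
\[ \E[Y_i(t)\mid \X=\x,\bL=\bl] = \E[Y_i(t;W_j(x_j),\x)\mid \X=\x,\bL=\bl]. \]
I would then invoke Assumption~\ref{as:txignorability}, read in its natural joint form so that the counterfactual triple $(W_i(x_i),W_j(x_j),Z_i(W_j(x_j),\x))$ is conditionally independent of $\X$ given $\bL$, to drop the conditioning on $\X$ on the right-hand side and obtain $\E[Y_i(t;W_j(x_j),\x)\mid \bL=\bl]$.

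For the second identity, I would begin with iterated expectations, conditioning on $W_j(x_j')$ given $\bL$:
\[ \E[Y_i(t;W_j(x_j'),\x)\mid \bL=\bl] = \int \E[Y_i(t;w_j,\x)\mid W_j(x_j')=w_j,\bL=\bl]\, dF_{W_j(x_j')\mid \bL}(w_j\mid \bl). \]
Next, I would apply Assumption~\ref{as:crossworld}, together with the joint extension of Assumption~\ref{as:exclusion} to different treatment labels, to conclude that $(W_i(x_i),Z_i(w_j,\x)) \indep W_j(x_j') \mid \bL$, which implies $Y_i(t;w_j,\x) \indep W_j(x_j') \mid \bL$ since $Y_i(t;w_j,\x)$ is a deterministic function of these two counterfactuals through \eqref{eq:Tdef2} at fixed $w_j$. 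The conditional expectation inside the integral then collapses to $\E[Y_i(t;w_j,\x)\mid \bL=\bl]$, which is identified by Theorem~\ref{thm:id}. Finally, by the last statement of Assumption~\ref{as:exclusion} the marginal conditional distribution of $W_j(x_j')$ given $\bL$ depends only on $\bL_j$, and equals $F_j(w_j\mid x_j',\bl_j)$ as identified by Lemma~\ref{lem:Fi}.

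The main obstacle will be justifying the \emph{joint} cross-world conditional independence that $Y_i(t;w_j,\x) \indep W_j(x_j') \mid \bL$, since Assumption~\ref{as:crossworld} is stated only for $Z_i(w_j,\x)$ in isolation. I would address this either by appealing to the joint reading of Assumptions~\ref{as:exclusion} and~\ref{as:crossworld}, or equivalently by reading off the required conditional independence via d-separation in the single-world intervention graph of Figure~\ref{fig:dag_swig} under the intervention $\pdo{W_j=w_j,\X=\x}$, in which no directed path connects $W_j(x_j')$ to the inputs of $T_i(w_j,\x)$ once $\bL$ is conditioned upon. Once this step is in place, what remains is a routine substitution combining Lemma~\ref{lem:Fi} and Theorem~\ref{thm:id}.
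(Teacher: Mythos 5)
Your proposal is correct, and for the cross-world identity it is essentially the paper's own argument: iterate expectations over $W_j(x_j')$ given $\bL$, use cross-world independence to drop the conditioning inside the integral, and then substitute Lemma~\ref{lem:Fi} for the mixing distribution and Theorem~\ref{thm:id} for the integrand. Where you diverge is the first identity. The paper also marginalizes there, writing $\E[Y_i(t;W_j(x_j),\x)|\bL=\bl]$ as $\int_0^\infty \E[Y_i(t;u,\x)|W_j=u,\X=\x,\bL=\bl]\,dF_j(u|x_j,\bl_j)$ and then invoking its Lemma~\ref{le:consistency} (which equates $\E[Y_i(t;w_j,\x)]$ with $\E[Y_i(t)|W_j=w_j,\X=\x]$) before reassembling the integral into $\E[Y_i(t)|\X=\x,\bL=\bl]$. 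You instead argue pathwise: on $\{\X=\x\}$ consistency makes $Y_i(t;W_j(x_j),\x)=Y_i(t)$ almost surely, and exchangeability removes the conditioning on $\X$. Your route is more direct and avoids the detour through the controlled potential outcome, at the cost of requiring the counterfactual triple $(W_i(x_i),W_j(x_j),Z_i(W_j(x_j),\x))$ to be \emph{jointly} independent of $\X$ given $\bL$, whereas Assumption~\ref{as:txignorability} is stated only componentwise. You flag this honestly, and the same joint-versus-componentwise reading is needed (but left implicit) in the paper's own Lemma~\ref{lem:outcomeindependence} and in its use of Assumption~\ref{as:crossworld}, so this is a shared informality rather than a defect of your argument. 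One small caution: in your final step you should note that the integral can be truncated at $t$ because $\E[Y_i(t;w_j,\x)|\bL=\bl]$ for $w_j\ge t$ contributes only through $F_i(t|x_i,\bl_i)$, which is how the stated upper limit of $t$ in the Corollary arises.
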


Finally, by standardization of the potential infection outcome across the distribution of covariates $\bL$, we can identify the average potential infection outcome.  Let $G(\bl)$ be the distribution function of the joint covariate vector $\bL=\bl$ in the population of partnerships. Then
\[ \E[Y_i(t;w_j,\x)] = \int \E[Y_i(t;w_j,\x)|\bL=\bl] dG(\bl) \]
and 
\[ \E[Y_i(t;W_j(x_j'),\x)] = \int \E[Y_i(t;W_j(x_j'),\x)|\bL=\bl] dG(\bl) \] 
where $\E[Y_i(t;w_j,\x)|\bL=\bl]$ and $\E[Y_i(t;W_j(x_j'),\x)|\bL=\bl]$ are given by Theorem \ref{thm:id} and Corollary \ref{cor:idnatural} respectively. 
\label{thm:adjustment}


\section{Causal estimands}

Contrasts of potential infection outcomes under different treatments $\x$ and infection times $w_j$ can yield epidemiologically meaningful causal estimands.  In this paper, we express causal contrasts as differences of average potential infection outcomes.  Effect measures on the hazard ratio, risk ratio, or odds ratio scales may be defined similarly \citep[e.g.][]{halloran1991direct,ohagan2014estimating}.  

First, the contagion effect captures the change in infection risk in one individual due to a change in the infection time of their partner.
\begin{defn}[Contagion effect] 
For $w_j \neq w_j'$ and treatment $\x=(x_i,x_j)$, the controlled contagion effect is 
$\CE(t,w_j,w'_j,\x) = \E[ Y_i(t;w_j,\x) - Y_i(t;w_j',\x)]$  
and the natural contagion effect is 
$\CE(t,\x) = \E[Y_i(t;W_j(0),\x) - Y_i(t;W_j(1),\x)]$.
\label{defn:contagion}
\end{defn}
We say that the infection outcome (absent treatment) is ``positively contagious'' if for  all infection times $w_j < w_j'$ with $w_j < t$, the controlled contagion effect under no treatment is $\CE(t,w_j,w_j',\mathbf{0}) > 0$.  In this way, we interpret contagion, or outcome transmissibility, as a causal phenomenon that need not depend on treatments: under positive contagion, earlier infection of one's partner \emph{causes} one to become infected earlier, on average.  On the other hand, the natural contagion effect $\CE(t,\x)$ captures features of the treatment effect: it replaces fixed values of $w_j$ and $w_j'$ with their counterfactual values under no treatment versus treatment of $j$, similar to the effect proposed by \citet{vanderweele2012components} for an asymmetric partnership.  The natural contagion effect is a ``cross-world'' estimand because it integrates the average potential infection outcome $\E[Y_i(t;w_j,\x)]$ with respect to the distribution of $W_j(x_j')$ under a treatment $X_j=x_j'$ that cannot arise in the same realization as $X_j=x_j$. Figure \ref{fig:illustration} can be reinterpreted in light of Definition \ref{defn:contagion}: positive contagion means that earlier infection of $j$ causes $i$ to become infected earlier, compared to the infection time of $i$ that would have occurred, had $W_j$ happened later. 

The susceptibility effect is of interest in vaccine trials because it summarizes the clinical effect of an intervention on the individual who receives it, holding the treatment status and infection time of their partner constant \citep{halloran1995causal,halloran1997study,golm1999semiparametric}.  The susceptibility effect is sometimes called the ``per-exposure effect'' because it holds the distribution of exposure to infectiousness constant \citep{ohagan2014estimating}. 
\begin{defn}[Susceptibility effect]
For $w_j>0$ and $X_j=x_j$, the controlled susceptibility of effect is 
$\SE(t,w_j,x_j) = \E[ Y_i(t;w_j,x_i=1,x_j) - Y_i(t;w_j,x_i=0,x_j)]$
and the natural susceptibility effect is 
$\SE(t,x_j) = \E[ Y_i(t;W_j(x_j),x_i=1,x_j) - Y_i(t;W_j(x_j),x_i=0,x_j) ]$.
\label{defn:susceptibility}
\end{defn}
If the controlled susceptibility effect is negative for every $w_j$ and $x_j$, this means that the treatment is beneficial to the individual who receives it.  Note that the natural susceptibility effect is not a cross-world estimand: it averages potential infection outcomes with respect to the distribution of $W_j(x_j)$, where $x_j$ is the treatment under which the infection outcome of $i$ is realized. 

The infectiousness effect summarizes the effect of changing the treatment to $j$ on the infection risk of $i$, while holding the treatment to $i$ and the infection time of $j$ unchanged.  
\begin{defn}[Infectiousness effect] 
For $w_j>0$ and $X_i=x_i$, the controlled infectiousness effect is $\IE(t,w_j,x_i) = \E[ Y_i(t;w_j,x_i,x_j=1) - Y_i(t;w_j,x_i,x_j=0)]$  
and the natural infectiousness effect is $\IE(t,x_i) = \E[Y_i(t;W_j(0),x_i,x_j=1) - Y_i(t;W_j(0),x_i,x_j=0)]$.  
\label{defn:infectiousness}
\end{defn}
The natural infectiousness effect is a cross-world estimand because the first term in the contrast specifies that the infection time of $j$ is realized under $x_j=0$, but the infectiousness of $j$ subsequently is realized under $x_j=1$.  Several authors have described the natural infectiousness effect as unidentified even under randomization when only binary infection outcomes are recorded at follow-up \citep{vanderweele2011bounding,chiba2012note,halloran2012causal,vanderweele2012components,chiba2012note,chiba2013simple,chiba2013conditional}.  Corollary \ref{cor:idnatural} and Definition \ref{defn:infectiousness} together show why this is the case.  The correct marginalization over infection times $W_j(x_j')$ cannot be computed unless the distribution of $W_j(x_j)$ is identified as in Lemma \ref{lem:Fi}.  The controlled and natural infectiousness effects are similar to those proposed by \citet{chiba2013conditional}, but here the marginalization is over the infection time of $j$, not their binary infection outcome. 

\section{Comparison to other infectious disease intervention effects}

Statisticians and epidemiologists have proposed a wide variety of estimands summarizing the effect of interventions for contagious outcomes, often in the two-person partnership setting.  In this section, we evaluate the meaning of alternative definitions of vaccine effect estimands in the context of the causal effects defined above.  We take the contagion, susceptibility, and infectiousness effects defined above as fundamental characteristics of the disease transmission process and intervention under study.  Whenever possible, we characterize the sign, or direction, of alternative effects, as a function  of these primitives.  In some cases, where the relationship is complex, we evaluate the alternative estimands under a null hypothesis, for example when the controlled susceptibility or infectiousness effect is zero, so that explicit results can be analytically proven.  For simplicity, we omit the role of covariates $\bL$ in the comparison of estimands. 

The ``attack rate'' of an infectious disease is defined for individuals with treatment $x$ as $\AR_x(t) = \E[Y_i(t)|X_i=x]$.  The ratio of attack rates, sometimes called ``relative cumulative incidence'', has been used to measure the vaccine effect on susceptibility, defined as $\VE_{\AR}(t) = 1- \AR_1(t)/\AR_0(t)$ \citep{greenwood1915statistics,francis1955evaluation}.  A related estimand, called the ``direct effect'', is a contrast on the difference scale, $\DE(t) = \AR_1(t) - \AR_0(t)$ when treatment is randomized within groups \citep{hudgens2008toward}.  In the symmetric partnership setting, attack rates $\AR_x(t)$ that condition only on the treatment to $i$ implicitly marginalize over treatment to $j$. 
\begin{thm}
Suppose $\SE(t,w_j,x_j) = 0$ and $\IE(t,w_j,x_j)<0$ for all $x_j$ and $w_j>0$.  If $\X=(X_i,X_j)$ is positively dependent, then $\DE(t)<0$ and $\VE_{\AR}(t)>0$; if $\X$ is negatively dependent then $\DE(t)>0$ and $\VE_{\AR}(t)<0$; and if $X_i \indep X_j$ then $\DE(t)= \VE_{\AR}(t)=0$.  If there is no treatment effect whatsoever, $\SE(t,w_j,x)=\IE(t,w_j,x)=0$ for all $x$ and $w_j>0$, then $\DE(t)=\VE_{\AR}(t)=0$ for any joint distribution of $\X$. 
  \label{thm:VEAR}
\end{thm}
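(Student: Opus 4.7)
The overall plan is to condition each attack rate on the partner's treatment, convert the resulting conditional expectations into natural potential outcomes, and then combine the two hypotheses in a short calculation. By the law of total probability,
$$\AR_x(t) = \sum_{x_j \in \{0,1\}} \Pr(X_j = x_j \mid X_i = x) \, \E[Y_i(t) \mid \X = (x,x_j)].$$
Consistency (Assumption \ref{as:consistency}), treatment exchangeability (Assumption \ref{as:txignorability}), and the same-world half of Corollary \ref{cor:idnatural} identify each conditional expectation as a natural potential outcome, $\mu(x, x_j) := \E[Y_i(t) \mid \X = (x, x_j)] = \E[Y_i(t; W_j(x_j), (x, x_j))]$, with $\bL$ suppressed as in the theorem's preamble. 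The task reduces to analyzing $\mu(\cdot, 0)$ and $\mu(\cdot, 1)$.

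The hypothesis $\SE(t, w_j, x_j) = 0$, integrated against $F_j(\cdot \mid x_j)$, produces a zero natural susceptibility effect, so $\mu(1, x_j) = \mu(0, x_j) =: a_{x_j}$ for each $x_j$. Comparing $a_0$ and $a_1$ requires the baseline distributions $F_j(\cdot \mid 0)$ and $F_j(\cdot \mid 1)$ to coincide. To see that they do, I specialize $\SE(t, w_j, x_j) = 0$ to $w_j \ge t$: the $t \le w_j$ branch of Theorem \ref{thm:id} collapses $\SE(t, w_j, x_j)$ to $F_i(t \mid 1) - F_i(t \mid 0)$, forcing $F_i(\cdot \mid 1) = F_i(\cdot \mid 0)$. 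By the interchangeability of the labels $i$ and $j$ in a symmetric partnership, the same equality holds with roles swapped, yielding $F_j(\cdot \mid 0) = F_j(\cdot \mid 1) =: F_j$.

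With a common $F_j$, I write $a_{x_j} = \int \E[Y_i(t; w_j, (0, x_j))] \, dF_j(w_j)$, so that
$$a_0 - a_1 = -\int \IE(t, w_j, 0) \, dF_j(w_j) > 0,$$
using $\IE(t, w_j, 0) < 0$ on $\{w_j < t\}$ by hypothesis, $\IE(t, w_j, 0) = 0$ on $\{w_j \ge t\}$ again by the $t \le w_j$ branch of Theorem \ref{thm:id}, and $\Pr(W_j < t) > 0$ from Positivity (Assumption \ref{as:positivity}). Substituting back,
$$\DE(t) = -(a_0 - a_1)\bigl[\Pr(X_j = 1 \mid X_i = 1) - \Pr(X_j = 1 \mid X_i = 0)\bigr],$$
from which the three sign claims follow by inspecting the bracketed factor under positive, negative, or independent dependence of $\X$; the signs for $\VE_{\AR}(t) = 1 - \AR_1(t)/\AR_0(t)$ match those of $-\DE(t)$ since $\AR_0(t) > 0$. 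For the fully null case $\SE = \IE = 0$, the same integral gives $a_0 = a_1$, so $\AR_x(t)$ is constant in $x$ and both effects vanish irrespective of the joint law of $\X$.

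The main obstacle is the intermediate equality $F_j(\cdot \mid 0) = F_j(\cdot \mid 1)$, which is essential for comparing $a_0$ and $a_1$ directly and is not an immediate consequence of the susceptibility hypothesis phrased in terms of $i$ alone. I handle it by combining the symmetric structure of the partnership with the boundary case of Theorem \ref{thm:id}; everything else is bookkeeping with the total probability formula.
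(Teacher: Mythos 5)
Your proposal is correct and follows essentially the same route as the paper's proof: you reproduce the content of its Lemma~\ref{lem:fortheorem3_1} (that $\SE=0$ forces both $F_j(\cdot\mid 1)=F_j(\cdot\mid 0)$ and equality of $\E[Y_i(t)\mid \X=(x,x_j)]$ across $x$, and that $\IE<0$ then signs $a_0-a_1$) and then factor $\DE(t)$ into that difference times $\Pr(X_j=1\mid X_i=1)-\Pr(X_j=1\mid X_i=0)$, exactly as in the paper's display~\eqref{mainproof3}. The only cosmetic differences are that you obtain $F_j(\cdot\mid 1)=F_j(\cdot\mid 0)$ from the $t\le w_j$ branch of Theorem~\ref{thm:id} plus label symmetry rather than from the $w_i=\infty$ evaluation used in the paper, and that you are slightly more explicit about invoking positivity to make the inequality strict.
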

In other words, $\VE_{\AR}(t)$ and $\DE(t)$ may or may not recover the sign, or direction, of the susceptibility effect, depending on the susceptibility and infectiousness effects, and the joint distribution of $\X$ within clusters.  
\citet{morozova2018risk} and \citet{eck2019randomization} proved similar results in a parametric setting under Bernoulli, block, and cluster randomization for the joint treatment $\X$ in clusters or partnerships.
\citet{halloran1991direct}, \citet{halloran1994exposure}, \citet{halloran1997study} and \citet{rhodes1996counting}  warned that $\VE_{\AR}(t)$ may be a biased approximation to the susceptibility effect due to differential exposure to infection between treated and untreated individuals in clusters.  

Related definitions of the attack rate condition on the treatments to both individuals in the partnership.  The attack rate among individuals with treatment $x$ whose partner has treatment $x'$ is $\AR_{x,x'}(t) = \E[Y_i(t)|X_i=x,X_j=x']$.  The indirect effect is defined as $\IDE(t) = \AR_{01}(t) - \AR_{00}(t)$ \citep{hudgens2008toward}, and is equivalent to the difference of the natural infectiousness and contagion effects defined above: 
\begin{equation*}
\begin{split}
\IDE(t) & = \E[Y_i(t;W_j(1),(0,1)) - Y_i(t;W_j(0),\mathbf{0})] \\
       & = \E[Y_i(t;W_j(1),(0,1)) - Y_i(t;W_j(0),(0,1))] + \E[Y_i(t;W_j(0),(0,1)) - Y_i(t;W_j(0),\mathbf{0})] \\
       &= -\CE(t,(0,1))+\IE(t,0) .
\end{split}
\end{equation*}
The secondary attack rate is defined as $\SAR_{x',x}(t) = \E[Y_i(t)|T_j<t,T_i>T_j,X_i=x,X_j=x']$, the average infection status of $i$ when $j$ is infected during the study before $i$, under treatments $x$ and $x'$ to $i$ and $j$ respectively.  The ``secondary attack rate for infectiousness'' is defined as $\VE_{\I}^{\net}(t) = 1 - \SAR_{10}(t)/\SAR_{00}(t)$ \citep{halloran2012causal}. 
We analyze $\VE_{\I}^{\net}(t)$ under the null hypothesis of no infectiousness effect, and show that when the infection is contagious and there is a susceptibility effect, $\VE_{\I}^{\net}(t)$ may nevertheless be nonzero.  Let $h_0(u|0)$ be the hazard of of the potential infection time $W_i(0)$, and let $h_0(u|1)$ be the hazard of $W_i(1)$. 
\begin{thm}
Suppose $\IE(t,w_j,0) = 0$, $\CE(t,w_j,w'_j,\mathbf{0})>0$ for all $0<w_j<w'_j$, and $h_0(u|1)=\varepsilon h_0(u|0)$ with $\varepsilon \in [0,1)$, then $\VE_{\I}^{\net}(t) > 0$. Suppose $\IE(t,w_j,0) = \SE(t,w_j,x_j) = 0$ for all $w_j$ and $x_j$, then $\VE_{\I}^{\net}(t) = 0$. Suppose $\CE(t,w_j,w'_j,\mathbf{0})=0$ for all $0<w_j<w'_j$ and $h_0(u|1)=\varepsilon h_0(u|0)$ with $\varepsilon \in [0,1)$, then $\VE_{\I}^{\net}(t) > 0$.
\label{thm:VEInet}
\end{thm}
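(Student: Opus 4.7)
The plan is to derive a single identified expression for $\SAR_{x'0}(t)$ in terms of the distributions of $W_j(x')$, $W_i(0)$, and the residual time $Z_i$, and then to analyze that expression in each of the three scenarios. First I would use the decomposition \eqref{eq:Tdef2}: on the event $\{T_j < T_i, T_j < t\}$, partner $j$ is infected externally first, so $T_j = W_j(x')$ and $T_i = W_j(x') + Z_i(W_j(x'), (0, x'))$. Invoking the conditional independencies in Assumptions~\ref{as:exclusion}--\ref{as:infignorability}, omitting $\bL$ for simplicity, and writing $f_j^{x'}$ for the density of $W_j(x')$, $F_i^0$ for the CDF of $W_i(0)$, and $G_{x'}(u) = \Pr\bigl(Z_i(u,(0,x')) < t - u\bigr)$, the factorization yields
\begin{equation*}
\SAR_{x'0}(t) \;=\; \frac{\int_0^t G_{x'}(u)\, f_j^{x'}(u)\,(1 - F_i^0(u))\, du}{\int_0^t f_j^{x'}(u)\,(1 - F_i^0(u))\, du} \;=\; E_{\tilde f_{x'}}\!\bigl[G_{x'}(W_j)\bigr],
\end{equation*}
where $\tilde f_{x'}(u) \propto f_j^{x'}(u)(1-F_i^0(u))$ is a tilted density on $[0,t]$.

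Next I would establish the auxiliary stochastic dominance statement that drives the two positive claims: when $h_0(u \mid 1) = \varepsilon h_0(u \mid 0)$ with $\varepsilon \in [0,1)$, the density ratio $f_j^1(u)/f_j^0(u) = \varepsilon (1 - F_j^0(u))^{\varepsilon - 1}$ is strictly increasing in $u$, so the same holds for $\tilde f_1(u)/\tilde f_0(u)$. By monotone likelihood ratio, $\tilde f_1$ first-order stochastically dominates $\tilde f_0$, and therefore $E_{\tilde f_1}[G] < E_{\tilde f_0}[G]$ for any strictly decreasing $G$.

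With these ingredients, each claim reduces to checking how $G_{x'}$ behaves under its hypotheses. For the first claim, $\IE(t,w_j,0)=0$ forces the distribution of $Z_i$ to be invariant in $x_j$, so $G_0 = G_1 \equiv G$. Writing $H(w) := F_i^0(w) + (1 - F_i^0(w)) G(w) = \E[Y_i(t; w, \mathbf{0})]$ via Theorem~\ref{thm:id}, positive contagion says $H$ is strictly decreasing on $[0,t)$; differentiating gives $H'(w) = f_i^0(w)(1 - G(w)) + (1 - F_i^0(w)) G'(w)$, and since the first summand is nonnegative, $H'(w) < 0$ forces $G'(w) < 0$. The dominance step then yields $\SAR_{10}(t) < \SAR_{00}(t)$, hence $\VE_{\I}^{\net}(t) > 0$. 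For the second claim, $\IE = \SE = 0$ combined with the symmetry of the partnership (so that $\SE = 0$ holds for partner $j$ as well) makes both $f_j^{x'}$ and $G_{x'}$ invariant in $x'$, so the numerator and denominator match and $\SAR_{10} = \SAR_{00}$. For the third claim, no contagion means $\E[Y_i(t; w, \mathbf{0})] = F_i^0(t)$ is constant in $w$, which by the same identity forces $G(u) = (F_i^0(t) - F_i^0(u))/(1 - F_i^0(u))$, strictly decreasing in $u$; the dominance step again gives $\VE_{\I}^{\net}(t) > 0$.

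The main obstacle is the short but subtle step in part one of translating positive contagion, a condition on the composite quantity $H$, into strict monotonicity of $G$ alone. Everything else reduces to a routine MLR-based stochastic dominance argument once $\SAR$ has been expressed as a tilted expectation of $G$, together with an appeal to partnership symmetry to collapse $f_j^1 = f_j^0$ in part two.
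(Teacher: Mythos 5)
Your proposal is correct and follows essentially the same route as the paper's proof: both express $\SAR_{x'0}(t)$ as an expectation of the conditional post-exposure infection probability under a density tilted by $f_j^{x'}(u)\,(1-F_i^{0}(u))$, deduce strict monotonicity of that probability from the contagion hypothesis via the identity of Theorem~\ref{thm:id}, and handle the three cases by combining this with stochastic dominance of the tilted exposure distributions (or their coincidence when $\SE=0$). The only difference is cosmetic: you obtain the dominance via a monotone-likelihood-ratio argument, whereas the paper establishes the same fact by integration by parts together with an explicit computation of $G(u|1)-G(u|0)$ under the proportional-hazards condition.
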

In other words, when the true infectiousness effect is null, the infection outcome is positively contagious, and the vaccine has a favorable susceptibility effect prior to the first infection, $\VE_{\I}^{\net}(t)$ is nonzero. In a more extreme case, when the true contagion effect is null, the disease is not transmissible; if the vaccine has a favorable susceptibility effect prior to the first infection, then $\VE_{\I}^{\net}(t)$ is still nonzero. 

A simple explanation shows why $\VE_{\I}^{\net}$ can behave in unexpected ways: is not solely a function of the infectiousness effect. Instead, $\VE_{\I}^{\net}(t)$ also incorporates features of the susceptibility effect on the partner of $i$ \emph{before the first infection occurs}. When both the infectiousness and susceptibility effects are null, $\VE_{\I}^{\net}(t)$ recovers the correct null effect of infectiousness. Several authors have pointed out that $\VE_{\I}^{\net}(t)$ may suffer from selection bias because it conditions on post-randomization variables -- the infection status of both partners \citep{halloran1991direct,halloran1994exposure,rhodes1996counting,halloran1997study,halloran2012causal}.  \citet{halloran2012causal} use tools from principal stratification to derive bounds for the infectiousness effect under this selection bias, and propose a bound estimator ${\CVE}_{\I}^c(t)$ for $\VE_{\I}^{\net}(t)$ under Bernoulli randomization. We analyze these bounds by simulation below.

Several authors have recognized that simple comparison of outcomes in treated versus untreated individuals may not suffice  to identify meaningful causal effects for infectious disease interventions, even under randomization.  For example, \citet{vanderweele2010direct}, \citet{vanderweele2012components}, \citet{ogburn2014causal}, and \citet{ogburn2017vaccines} apply tools from mediation analysis to a simplified partnership model to identify contagion and infectiousness effects similar to those we have defined above.  In the asymmetric partnership model, subject $i$ is unvaccinated, home-bound, and can only be infected by their (possibly vaccinated) partner $j$.  To represent this structural assumption in the framework outlined here, we force the infection time of the home-bound partner, in the absence of infection in their partner, to be infinite.  To this end, let hazard of $W_i(0)$ be $h^i_0(t|0)=0$, so that infection of $i$ from an external source can never occur.  These authors define the infectiousness effect as $\VE_{\I}(t) = \E[Y_i(t;Y_j(1),(0,1))] -\E[Y_i(t;Y_j(1),(0,0))]$, which contrasts the infection outcomes of $i$ when $j$ is treated versus untreated, with $j$'s infection status $Y_j(x_j)$ set to the value it would take if $j$ were treated. 
\begin{thm}
Suppose $h^i_0(t|0)=0$ for all $t>0$. Then $\VE_{\I}(t)=\IE(t,0)$.
\label{th:asy_infectiousness}
\end{thm}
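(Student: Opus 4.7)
The plan is to reduce both $\VE_{\I}(t)$ and $\IE(t, 0)$ to a common functional form by exploiting the structural constraint $W_i(0) = \infty$ together with the decomposition \eqref{eq:Tdef2}, and then to identify the two expressions term by term.

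First I would apply the hypothesis $h^i_0(t|0) = 0$, which forces $W_i(0) = \infty$ almost surely. Since the asymmetric model fixes $x_i = 0$, the decomposition \eqref{eq:Tdef2} collapses to
\[ T_i(w_j, (0, x_j)) = w_j + Z_i(w_j, (0, x_j)) \]
for every $w_j > 0$ and $x_j \in \{0, 1\}$, and analogously $T_i(W_j(x'_j), (0, x_j)) = W_j(x'_j) + Z_i(W_j(x'_j), (0, x_j))$ for any counterfactual $W_j(x'_j)$. The outcome $Y_i(t; \cdot, (0, x_j))$ is thus a function of $j$'s infection time and the subsequent transmission alone, and vanishes whenever that time exceeds $t$.

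Next I would bridge VanderWeele's binary-mediator notation $Y_i(t; Y_j(1), \x)$ to the paper's dynamic notation. Under asymmetry, the binary indicator $Y_j(1) = \indicator{W_j(1) < t}$ is informationally equivalent to the timing $W_j(1)$ for determining $Y_i$: if $Y_j(1) = 0$ then $Y_i = 0$, and if $Y_j(1) = 1$ then the sole channel from $j$'s infection to $i$'s infection is through $W_j(1)$ entering $Z_i(W_j(1), \x)$. In the VanderWeele asymmetric formulation, the partner's vaccine acts only on transmission and not on $j$'s own community exposure, so $W_j(1) = W_j(0)$ at the counterfactual level. Combining these gives $Y_i(t; Y_j(1), \x) = Y_i(t; W_j(1), \x) = Y_i(t; W_j(0), \x)$, and substituting into the definition of $\VE_{\I}(t)$ reproduces $\IE(t, 0)$.

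The main obstacle is precisely the bridging step just above: the two frameworks represent $j$'s infection by different objects (a binary indicator versus a real-valued time), and aligning them requires both the stated assumption $W_i(0) = \infty$ and the implicit feature of the asymmetric model that $W_j$ is insensitive to $x_j$. If $x_j$ were permitted to influence $W_j$, Corollary \ref{cor:idnatural} would express $\VE_{\I}(t)$ and $\IE(t, 0)$ as integrals of the same controlled infectiousness effect $\E[Y_i(t; w_j, (0, 1)) - Y_i(t; w_j, (0, 0))]$ against $dF_j(w_j|1)$ and $dF_j(w_j|0)$ respectively, and equality would fail in general. Making this reduction rigorous is therefore the substantive content of the proof, with the remainder being bookkeeping that follows from \eqref{eq:Tdef2}.
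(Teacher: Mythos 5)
Your core reduction is the same as the paper's. The paper's proof also begins from $h^i_0(t|0)=0 \Rightarrow F_i(s|0)=0$, i.e.\ $W_i(0)=\infty$ almost surely, and its key step (the displayed identity it labels ``eq:replace'') is exactly your bridging move: because $i$ can never be infected externally, $j$ is necessarily the index case, $T_j(x'_j)=W_j(x'_j)$, and hence $Y_i(t;Y_j(x'_j),(0,x_j)) = Y_i(t;W_j(x'_j),(0,x_j))$ --- the binary mediator carries no information beyond the infection time. Where you diverge is the last step. The paper stops at $\E[Y_i(t;W_j(1),(0,1))-Y_i(t;W_j(1),(0,0))]$ and declares this to be $\IE(t,0)$; it never invokes, and the theorem's hypotheses do not include, your additional premise that $W_j(1)=W_j(0)$. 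That premise is not free: it says the vaccine has no effect on $j$'s own external acquisition, which is a substantive restriction (equivalent, at the distributional level, to a null susceptibility effect on $j$ prior to the first infection), and nothing in the asymmetric setup as formalized in this paper --- which only forces $h^i_0(t|0)=0$ --- implies it. So as written, your argument proves the theorem only under a stronger hypothesis than the one stated. That said, you have put your finger on a real wrinkle: Definition~3 writes $\IE(t,0)$ with the partner's infection time fixed at $W_j(0)$, while $\VE_{\I}(t)$ fixes it at $Y_j(1)$, so after the bridging step the two quantities marginalize the same controlled contrast against $dF_j(\cdot|1)$ versus $dF_j(\cdot|0)$ and are not literally equal without a further argument. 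The paper resolves this by identifying the $W_j(1)$-expression with $\IE(t,0)$ directly rather than by adding your assumption; if you want to match the stated theorem under its stated hypotheses, you should follow the paper and treat the $W_j(1)$-version as the target, dropping the $W_j(1)=W_j(0)$ step, or else flag the mismatch explicitly rather than patching it with an unstated assumption.
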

In other words, under the asymmetric setting where $i$ is unvaccinated and cannot be infected from outside the partnership, $\VE_{\I}(t)$ is equivalent to the natural infectiousness effect in Definition \ref{defn:infectiousness}. 

A contagion effect is defined by \citet{vanderweele2012components} as $\VE_{\C}(t) = \E[Y_i(t;Y_j(1),(0,0)) - Y_i(t;Y_j(0),(0,0))]$. Note that this quantity reverses the difference in Definition \ref{defn:contagion}, contrasting the infection outcome of $i$ when the infection status of $j$ is set to the value it would obtain if $j$ were treated versus untreated. We provide sufficient conditions for $\CE(t,u,u',\mathbf{0})$ and $\VE_{\C}(t)$ to behave similarly, that is, to have opposite sign.  
\begin{thm}
  Suppose $h^i_0(t|0)=0$ for all $t>0$ and $\SE(t,w_j,0)>0$. Then $\VE_{\C}(t)$ has opposite sign as $\CE(t, w_j,w'_j,\mathbf{0})$ for $0<w_j<w'_j$. Suppose $h^i_0(t|0)=0$, $\SE(t,w_j,0)=0$ and $\CE(t,w_j,w'_j,\mathbf{0})>0$, then $\VE_{\C}(t)=0$.
\label{th:asy_contagion}
\end{thm}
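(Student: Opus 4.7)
The plan is to exploit the asymmetric restriction $h^i_0(\cdot \mid 0) = 0$, which forces $W_i = \infty$ almost surely, so that $T_i = W_j + Z_i$ whenever $W_j < \infty$ and $T_i = \infty$ otherwise. This collapses the potential outcome to $Y_i(t; w_j, (0,0)) = \indicator{w_j < t}\, \indicator{Z_i(w_j, (0,0)) < t - w_j}$. Following the asymmetric mediation framework of \citet{vanderweele2012components}, I would read $Y_i(t; Y_j(x_j), (0,0))$ as $Y_i(t; W_j(x_j), (0,0))$: when $Y_j(x_j) = 0$ the home-bound $i$ cannot be infected, and when $Y_j(x_j) = 1$ the relevant transmission timing is $W_j(x_j)$ carried through $Z_i$.

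Applying Corollary \ref{cor:idnatural} then yields
\begin{equation*}
\VE_{\C}(t) = \int_0^t G_Z(t - w_j \mid w_j, (0,0))\, \bigl[dF_j(w_j \mid 1) - dF_j(w_j \mid 0)\bigr],
\end{equation*}
writing $G_Z(z \mid w_j, \x) = \Pr(Z_i(w_j, \x) < z)$ and $F_j(\cdot \mid x_j)$ for the distribution of $W_j(x_j)$ identified in Lemma \ref{lem:Fi}. The positive-contagion hypothesis $\CE(t, w_j, w_j', \mathbf{0}) > 0$ for all $0 < w_j < w_j' < t$ is equivalent to strict monotonicity of $w_j \mapsto G_Z(t - w_j \mid w_j, (0,0))$ on $(0, t)$. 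Integrating by parts, with the boundary terms vanishing because $F_j(0 \mid x_j) = 0$ and $G_Z(0 \mid t, (0,0)) = 0$, gives
\begin{equation*}
\VE_{\C}(t) = -\int_0^t \bigl[F_j(w_j \mid 1) - F_j(w_j \mid 0)\bigr]\, \frac{d}{dw_j} G_Z(t - w_j \mid w_j, (0,0))\, dw_j.
\end{equation*}

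The susceptibility hypothesis is then what signs the integral. Under $\SE(t, w_j, 0) > 0$, the symmetric model parameterization ties the vaccine's effect on $i$'s post-exposure hazard to its effect on $j$'s initial hazard, producing a strict stochastic ordering of $W_j(1)$ relative to $W_j(0)$ of definite direction. Combined with the negativity of $\frac{d}{dw_j} G_Z(t-w_j \mid w_j, (0,0))$ from positive contagion, the integral acquires a definite sign opposite to that of $\CE(t, w_j, w_j', \mathbf{0})$. For the null case $\SE(t, w_j, 0) = 0$, the same parameterization forces $F_j(\cdot \mid 1) \equiv F_j(\cdot \mid 0)$, so the Stieltjes measure $dF_j(w_j \mid 1) - dF_j(w_j \mid 0)$ vanishes identically and $\VE_{\C}(t) = 0$ irrespective of the positive-contagion assumption. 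The main obstacle I anticipate is the careful articulation of the symmetric transmission-model parameterization that couples $\SE$ -- formally a statement about the post-exposure waiting time $Z_i$ -- to the distribution of the initial infection time $W_j$; without this coupling the two objects are logically independent, and the theorem would require a direct hypothesis on $F_j(\cdot \mid x_j)$ in place of $\SE$.
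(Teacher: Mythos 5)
Your derivation follows the same route as the paper's: replace $Y_j(x_j)$ by $W_j(x_j)$ using $W_i(0)=\infty$, apply Corollary \ref{cor:idnatural} to write $\VE_{\C}(t)$ as an integral of the post-first-infection probability (your $G_Z(t-w_j\mid w_j,(0,0))$, the paper's $k(w_j)=\E[Y_i(t)\mid W_i>w_j, W_j=w_j,\X=(0,0)]$) against $dF_j(\cdot\mid 1)-dF_j(\cdot\mid 0)$, integrate by parts with vanishing boundary terms, and sign the two factors, with monotonicity of $k$ coming from positive contagion exactly as in the paper's proof of Theorem \ref{thm:VEInet}. The problem is the step you yourself flag as the main obstacle: you leave the link between $\SE(t,w_j,0)$ and the stochastic ordering of $W_j(1)$ versus $W_j(0)$ unproved, and you propose to close it by appealing to an unspecified ``symmetric transmission-model parameterization'' that ties the vaccine's effect on the post-exposure hazard to its effect on the initial hazard. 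That is both a genuine gap and a misdiagnosis of what $\SE$ is. No parametric coupling is needed, because the two objects are not logically independent: $\SE(t,w_j,x_j)$ contrasts the full potential outcomes $Y_i(t;w_j,x_i=1,x_j)$ and $Y_i(t;w_j,x_i=0,x_j)$, not just the post-exposure waiting time $Z_i$; when the partner's infection time is pushed to $w=\infty$ (``prior to the first infection''), these potential outcomes reduce to $\indicator{W_i(x_i)<t}$, so $\SE$ evaluated there is exactly a contrast of initial-infection distributions. The paper makes this precise in Lemma \ref{lem:fortheorem3_1}, equation \eqref{eq:seequal1}:
\begin{equation*}
F_j(t\mid x_j)=\Pr(W_j(x_j)<t)=\E[Y_j(t;w_i=\infty,x_j,x_i)],
\end{equation*}
so that $\SE>0$ (applied to subject $j$, whose labels are interchangeable with $i$'s in the generic definition) gives $F_j(w_j\mid 1)-F_j(w_j\mid 0)>0$ pointwise, and $\SE=0$ gives $F_j(\cdot\mid 1)\equiv F_j(\cdot\mid 0)$, which is precisely what your integration-by-parts identity needs. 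Without supplying this identity your argument does not establish either conclusion of the theorem, since, as you note, a direct hypothesis on $F_j(\cdot\mid x_j)$ would otherwise have to be added.

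One smaller point: your claim that $\CE(t,w_j,w_j',\mathbf{0})>0$ is \emph{equivalent} to strict monotonicity of $w_j\mapsto G_Z(t-w_j\mid w_j,(0,0))$ is correct here only because $F_i\equiv 0$ collapses Theorem \ref{thm:id} to $\E[Y_i(t;w_j,\mathbf{0})]=k(w_j)$; in the general case the paper derives the monotonicity of $k$ from $\CE>0$ through the full identification formula (display \eqref{decreasingk}), and you should state that you are using the asymmetric simplification rather than a general equivalence.
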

In other words, in the asymmetric partnership setting, $-\VE_{\C}$ recovers the sign of the contagion effect in Definition \ref{defn:contagion} when the vaccine has a favorable susceptibility effect \emph{prior to the first infection}.  However, if the true susceptibility effect is null, $\VE_{\C}(t)=0$ regardless of the true contagion effect.  That is, $\VE_{\C}(t)$ is largely a measure of the susceptibility effect, and not of the transmissibility of the infection outcome. 


\section{Application: a hypothetical vaccine trial}

We simulate observational and randomized trials of a hypothetical HIV vaccine in a large population of sexual partnerships \citep{halloran1994exposure}.  We assume individuals are not infected at baseline, but that either individual may become infected from outside the partnership, and transmission within partnerships may occur.  To parameterize the infection transmission process, we specify hazard models for the infection times $W_i(x_i)$ and $Z_i(w_j,\x)$. This approach has been employed in extensive prior work on statistical models for time-to-infection data \citep{halloran1994exposure,rhodes1996counting,kenah2008generation,kenah2010contact,kenah2013non,kenah2015semi}.  For a time $t>0$, Let the hazard of $W_i(x_i)$ given covariates $\bL_i=l_i$ be given by 
\begin{equation}
\lambda_i^W(t;x_i,l_i) = \alpha(t) e^{\beta_0 x_i + \theta_0'\bl_i} .
\label{eq:lamiW}
\end{equation}
In words, the hazard of infection in an individual whose partner is not infected, is given by a Cox model with baseline hazard $\alpha(t)$.  Following infection of $j$ at time $W_j=w_j$, the remaining potential infection time $Z_i(w_j,\x)$ given $\bL=\bl=(l_i,l_j)$ has hazard
\begin{equation}
\lambda_i^Z(t;w_j,\x,\bl) = \lambda_i^W(t;x_i) + \gamma(t-w_j) e^{\beta_1 x_i + \sigma x_j + \theta_1'\bl_j + \theta_2'\bl_i} 
\label{eq:lamiZ}
\end{equation}
for $t>w_j$. The coefficients $\beta_0$ and $\beta_1$ represent the change in infection risk due to vaccination of $i$, and $\sigma$ represents the change in transmission risk due to vaccination in $j$ when $j$ is infected. Covariate effects are represented by $\theta_0$, $\theta_1$, and $\theta_2$, and $\alpha(t)$ and $\gamma(t-w_j)$ are baseline transmission hazards for the external and internal forces of infection respectively.  This specification implies that the external force of infection and transmissibility are \emph{competing risks} for infection of $i$ \citep{longini1982estimating,longini1988statistical,rampey1992discrete}.  That is, a susceptible individual can be infected by a source of infectiousness outside their partnership, or from an infected partner.  When the baseline hazards $\alpha(t)$ and $\gamma(t-w_j)$ are time-invariant, the model reduces to a Markov susceptible-infective process with an external force of infection \citep[e.g][]{morozova2018risk,eck2019randomization}. We consider both time-invariant baseline infection hazards $\alpha(t)=\alpha$ and $\gamma(t)=\gamma$, and the seasonally varying $\alpha(t) = a(1+\sin(2 \pi t+\phi))$ with decaying infectiousness from an infected partner $\gamma(t-w_j) = b\exp[-\omega(t-w_j)]$.  For any functional forms of the time-varying baseline hazards $\alpha(t)$ and $\gamma(t-w_j)$, the hazard models \eqref{eq:lamiW} and \eqref{eq:lamiZ} imply distributions for $W_i(x_i)$ and $Z_i(w_j,\x)$, and hence $T_i(w_j,\x)$, that obey the identification Assumptions \ref{as:txignorability}--\ref{as:crossworld}.

Subjects in partnerships are endowed with individual characteristics $\bL$ that may be correlated. In the randomized trial simulation, the vaccine is randomized in accordance with a specified distribution -- Bernoulli, block, or cluster randomization -- without regard to these traits.  Under each randomization design, the marginal treatment probability $\Pr(X_i=x_i)$ is 1/2.  For Bernoulli randomization, $\Pr(\X=\x)=1/4$, for block randomization, $\Pr(\X=\x) = \indicator{\sum_i x_i = 1}/2$, and for cluster randomization, $\Pr(\X=(1,1)) = 1/2$ and $\Pr(\X=(0,0)) = 1/2$.  In the observational study simulation, the traits $\bL=\bl$ together determine the joint distribution of vaccine in the partnership as 
\[ \Pr(X_i=1|L_i=l_i) = \frac{1}{1+e^{-l_i}} \]
where 
\[ \begin{pmatrix}L_i \\ L_j\end{pmatrix} \sim \text{Normal}\left(\begin{pmatrix} 0 \\ 0 \end{pmatrix}, v\begin{pmatrix} 1 & \rho\\ \rho & 1 \end{pmatrix}\right)  \]
with $v>0$. 

Figure \ref{allcontrasts} illustrates controlled infection outcomes $\E[Y_i(t;w_j,\x)]$ over time for different choices of $w_j$ and $\x$, under the time-invariant hazard models. Contrasts of these potential infection outcomes give controlled contagion, susceptibility and infectiousness effects, shown in the lower-right corner of Figure \ref{allcontrasts}. 

Tables \ref{tab:sim} and \ref{tab:sim2} show estimates of the natural contagion, susceptibility and infectiousness effects from simulated data, and compare these values to alternative estimands proposed by other authors, including  the direct effect $\DE(t)$, the indirect effect $\IDE(t)$, the secondary attack rate infectiousness effect $\VE_{\I}^{\net}(t)$, and ${\CVE}_{\I}^c(t)$ bounds introduced by \citep{halloran2012causal}. All natural or marginal estimands are evaluated at time $t=2$ years under each design and under both time-invariant and time-varying baseline hazards with a sample size $N=100{,}000$ partnerships. Estimands that are not identified under a given design are not evaluated. Table \ref{tab:sim} shows a simulation example where the estimated direct effect $\DE(t)$ is positive (0.06 and 0.08) under block randomization when the disease is contagious, even though the true susceptibility effect is negative, or beneficial \citep[see, e.g.][]{morozova2018risk,eck2019randomization}. Table \ref{tab:sim2} shows another simulation setting where $\DE(t)$ achieves the same sign as the susceptibility effect. But in the setting without contagion, the disease is not contagious and infection outcomes are realized independently. Therefore, all ``indirect'' and ``infectiousness'' effects should be null. However, $\VE_{\I}^{\net}(t)$ is negative (-0.01 and -0.02 in both Table \ref{tab:sim} and \ref{tab:sim2}), conflicting with the fact that the disease is not transmissible (as proved in Theorem \ref{thm:VEInet}). The identification interval $\CVE_{\I}^c(t)$ has nonzero width, but covers zero. Figure \ref{differentestimand} compares estimates of different types of natural susceptibility and infectiousness effects over time, when both effects are beneficial (negative). In the bottom-right panel of Figure \ref{differentestimand}, we show that $\DE(t)$ under block randomization can suffer from directional bias.

\begin{figure}
    \centering 
\begin{subfigure}{0.49\textwidth}
  \includegraphics[width=\linewidth]{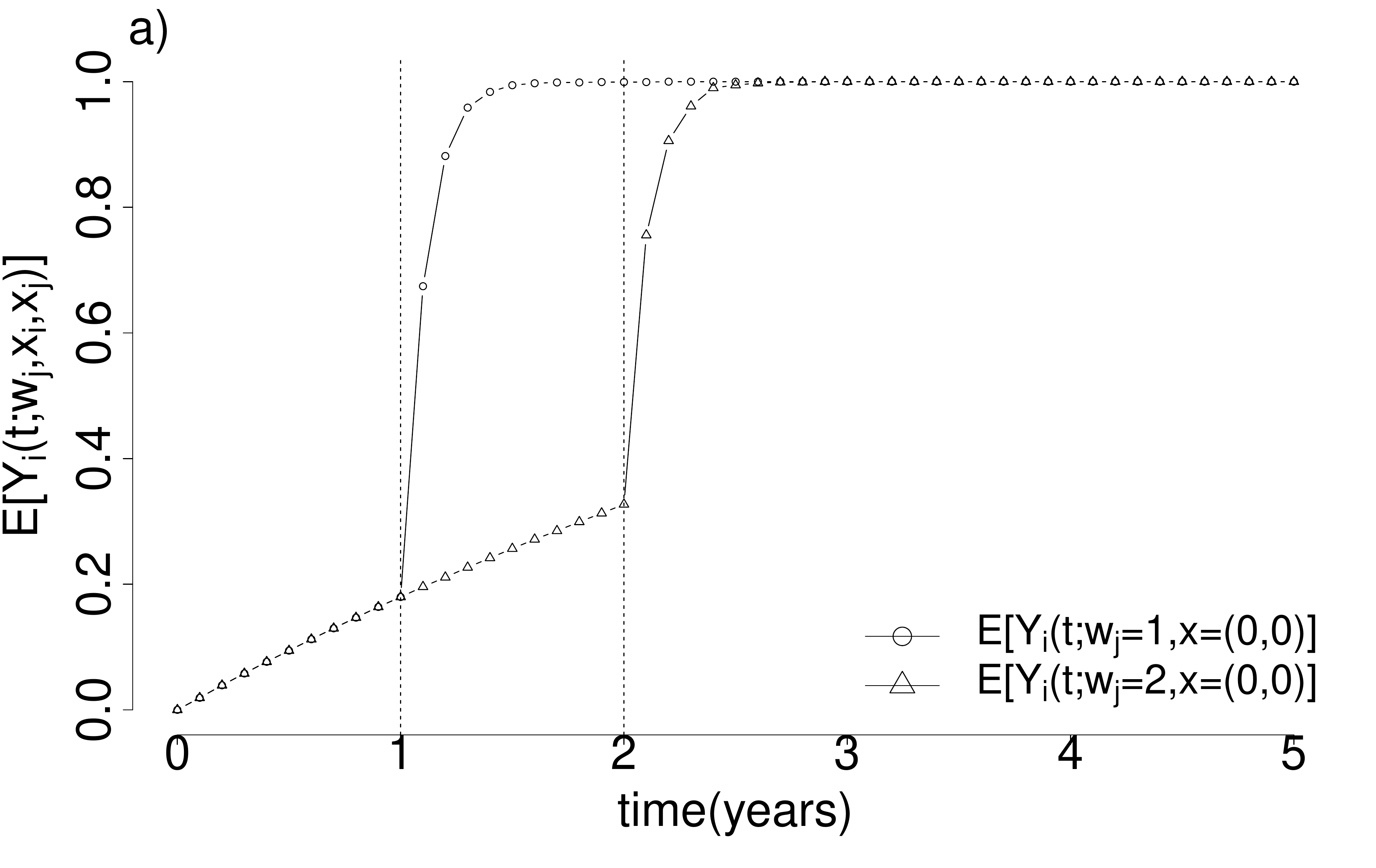}
\end{subfigure}\hfil 
\begin{subfigure}{0.49\textwidth}
  \includegraphics[width=\linewidth]{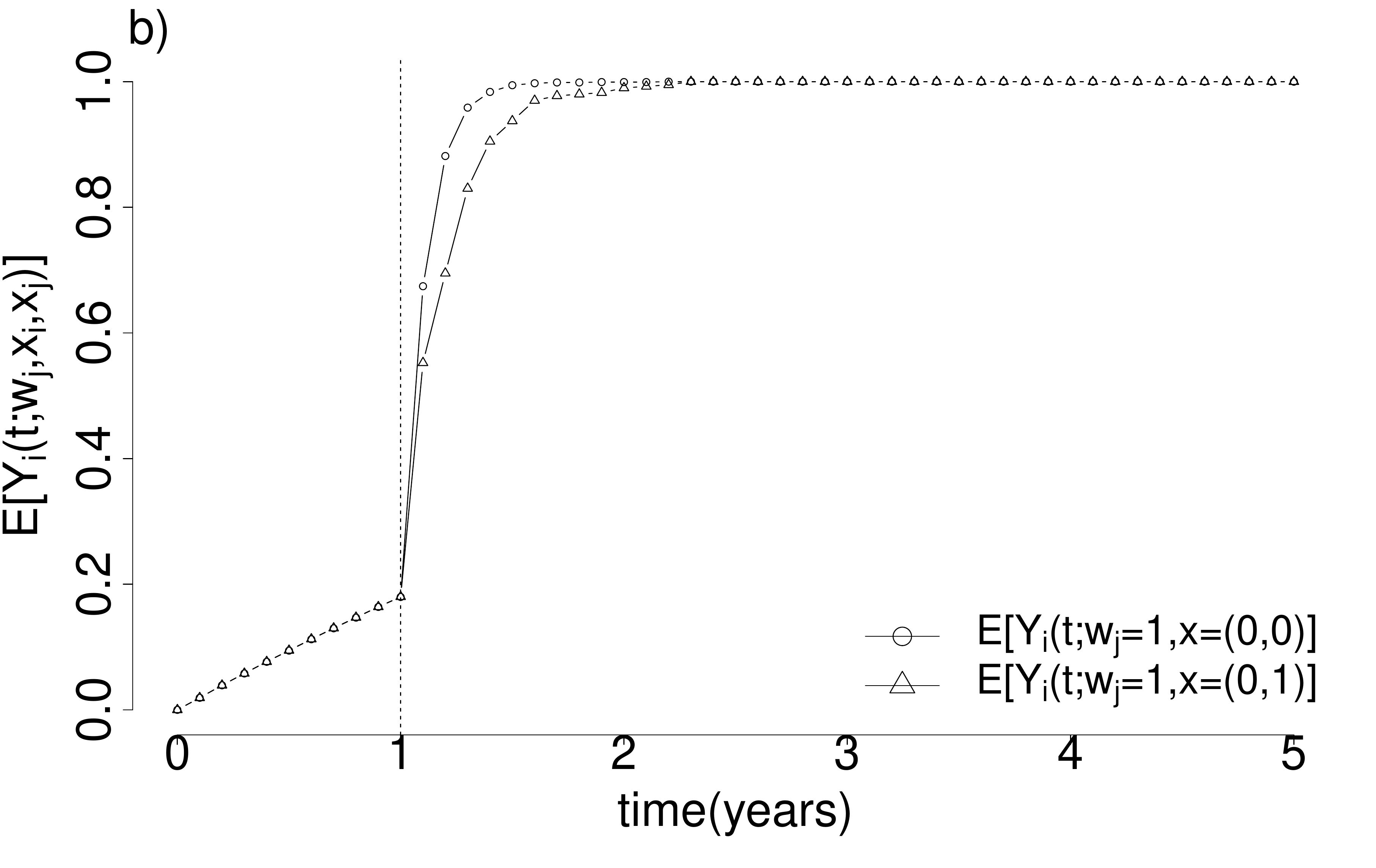}
\end{subfigure}\hfil 
\medskip
\begin{subfigure}{0.49\textwidth}
  \includegraphics[width=\linewidth]{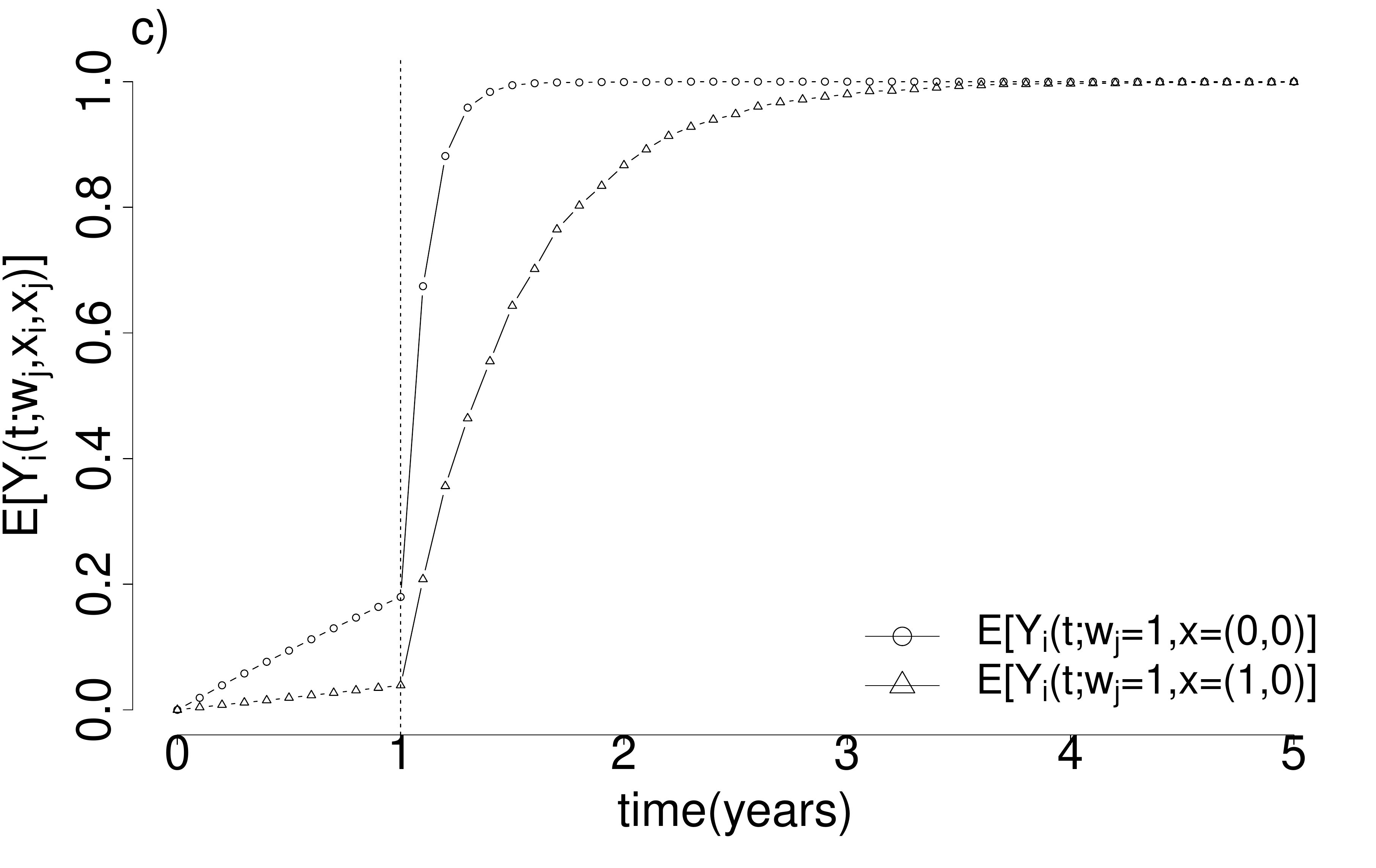}
\end{subfigure}\hfil 
\begin{subfigure}{0.49\textwidth}
  \includegraphics[width=\linewidth]{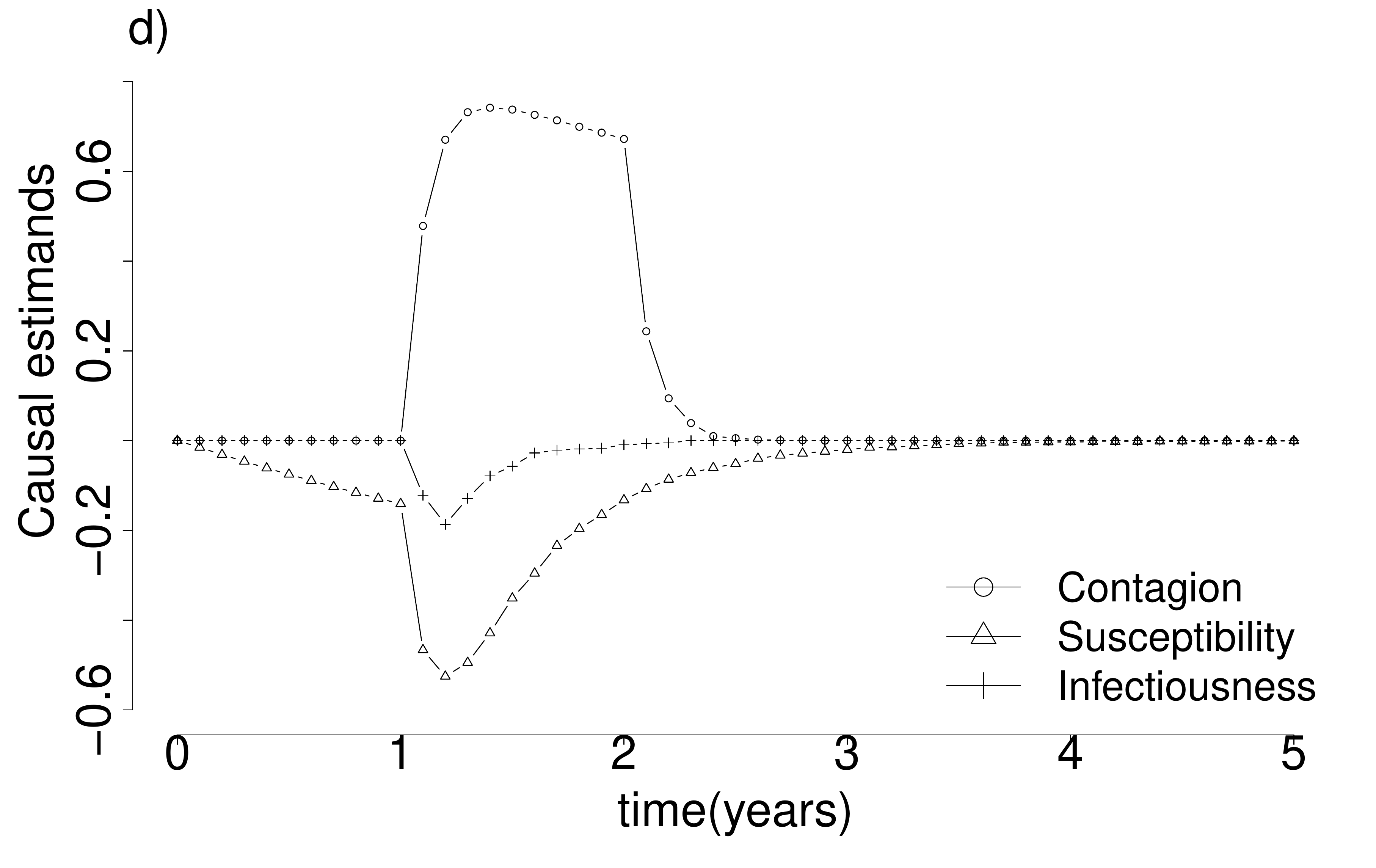}
\end{subfigure}\hfil 
\caption{Illustration of average controlled potential infection outcomes under different values of the infection time $w_j$ and joint treatment $\x$, under time-invariant baseline hazards $\alpha(t)=0.2$ and $\gamma(t-w_j)=10$ and coefficients $e^{\beta_0}=e^{\beta_1}=0.2$ and $e^{\sigma}=0.5$. Contrasts of potential outcomes in (a), (b) and (c) show the controlled contagion effect, the infectiousness effect, and the susceptibility effect evaluated at different times, shown together in (d).} 
\label{allcontrasts}
\end{figure}

\begin{sidewaystable}
    \centering
\begin{tabular}{llccccccc}
\hline
Simulation & Treatment & $\CE(t,\mathbf{0})$ & $\SE(t,0)$ & $\IE(t,0)$ & $\DE(t)$ & $\IDE(t)$ & $\VE_{\I}^{\net}(t)$ & ${\CVE}_{\I}^{c}(t)$ \\
  \hline
Constant hazards & Obs. & 0.12 & -0.14 & -0.19 & -0.16 & -0.20 & -0.68 &  \\ 
   & Bernoulli & 0.12 & -0.14 & -0.19 & -0.16 & -0.20 & -0.69 & ( -0.73  , -0.66 ) \\ 
   & Block &  &  &  & 0.06 &  &  &  \\ 
   & Cluster &  &  &  & -0.39 &  &  &  \\[5pt] 
  Constant hazards & Obs. & 0.00 & -0.17 & 0.00 & -0.17 & 0.00 & -0.01 &  \\ 
  without contagion & Bernoulli & 0.00 & -0.18 & 0.00 & -0.18 & 0.00 & -0.01 & ( -0.25  , 0.19 ) \\ 
   & Block &  &  &  & -0.18 &  &  &  \\ 
   & Cluster &  &  &  & -0.18 &  &  &  \\ [5pt]
  Time-varying hazards & Obs. & 0.11 & -0.14 & -0.19 & -0.22 & -0.21 & -0.50 &  \\ 
   & Bernoulli & 0.12 & -0.14 & -0.20 & -0.21 & -0.22 & -0.52 & ( -0.53  , -0.5 ) \\ 
   & Block &  &  &  & 0.08 &  &  &  \\ 
   & Cluster &  &  &  & -0.50 &  &  &  \\[5pt] 
  Time-varying hazards & Obs. & 0.00 & -0.28 & 0.00 & -0.28 & 0.00 & -0.02 &  \\ 
  without contagion & Bernoulli & 0.00 & -0.28 & 0.00 & -0.28 & 0.00 & -0.02 & ( -0.42  , 0.36 ) \\ 
   & Block &  &  &  & -0.28 &  &  &  \\ 
   & Cluster &  &  &  & -0.28 &  &  &  \\ 
   \hline
\end{tabular}
\caption{Simulation results showing estimates of the natural contagion, susceptibility, infectiousness effects, and alternative estimands defined by \citet{hudgens2008toward}, \citet{halloran2012causal}, and \citet{vanderweele2012components}. Estimands are evaluated under four different scenarios -- the constant hazards ($\alpha(t)=0.2$ and $\gamma(t)=10$), constant hazards without contagion ($\alpha(t)=0.2$ and $\gamma(t)=0$), time-varying hazards ($\alpha(t)=0.4(1+\sin(2 \pi t+\frac{\pi}{2}))$ and $\gamma(t)=25e^{-0.5(t-t_j)}$), and time-varying hazards without contagion ($\alpha(t)=0.4(1+\sin(2\pi t+\frac{\pi}{2}))$ and $\gamma(t)=0$), respectively. The effect of vaccination is the same across all scenarios with $e^{\beta_0}=e^{\beta_1}=0.4$ and $e^{\sigma}=0.01$. The individual covariates $(l_i,l_j)$ are correlated with $\rho=0.1$ and coefficients of $e^{\theta_0}=e^{\theta_1}=e^{\theta_2}=0.95$.  }
\label{tab:sim}
\end{sidewaystable}

\begin{sidewaystable}
    \centering
\begin{tabular}{llccccccc}
\hline
Simulation & Treatment & $\CE(t,\mathbf{0})$ & $\SE(t,0)$ & $\IE(t,0)$ & $\DE(t)$ & $\IDE(t)$ & $\VE_{\I}^{\net}(t)$ & ${\CVE}_{\I}^{c}(t)$ \\
  \hline
Constant hazards & Obs. & 0.14 & -0.17 & -0.01 & -0.19 & -0.14 & -0.05 &  \\ 
   & Bernoulli & 0.14 & -0.18 & -0.01 & -0.20 & -0.14 & -0.04 & ( -0.08  , 0.02 ) \\ 
   & Block &  &  &  & -0.03 &  &  &  \\ 
   & Cluster &  &  &  & -0.36 &  &  &  \\ [5pt]
  Constant hazards & Obs. & 0.00 & -0.21 & 0.00 & -0.21 & 0.00 & -0.01 &  \\ 
  without contagion & Bernoulli & 0.00 & -0.22 & 0.00 & -0.22 & -0.00 & -0.01 & ( -0.38  , 0.19 ) \\ 
   & Block &  &  &  & -0.22 &  &  &  \\ 
   & Cluster &  &  &  & -0.22 &  &  &  \\ [5pt]
  Time-varying hazards & Obs. & 0.13 & -0.18 & -0.01 & -0.23 & -0.14 & -0.03 &  \\ 
   & Bernoulli & 0.15 & -0.18 & -0.01 & -0.23 & -0.15 & -0.03 & ( -0.04  , 0 ) \\ 
   & Block &  &  &  & -0.03 &  &  &  \\ 
   & Cluster &  &  &  & -0.44 &  &  &  \\ [5pt]
  Time-varying hazards & Obs. & -0.01 & -0.34 & 0.00 & -0.34 & 0.00 & -0.02 &  \\ 
  without contagion & Bernoulli & 0.00 & -0.34 & 0.00 & -0.34 & 0.00 & -0.02 & ( -0.65  , 0.35 ) \\ 
   & Block &  &  &  & -0.34 &  &  &  \\ 
   & Cluster &  &  &  & -0.34 &  &  &  \\ 
   \hline
\end{tabular}
\caption{Simulation results showing estimates of the natural contagion, susceptibility, infectiousness effects, and alternative estimands defined by \citet{hudgens2008toward}, \citet{halloran2012causal}, and \citet{vanderweele2012components}. Estimands are evaluated under four different scenarios -- the constant hazards ($\alpha(t)=0.2$ and $\gamma(t)=10$), constant hazards without contagion ($\alpha(t)=0.2$ and $\gamma(t)=0$), time-varying hazards ($\alpha(t)=0.4(1+\sin(2 \pi t+\frac{\pi}{2}))$ and $\gamma(t)=25e^{-0.5(t-t_j)}$), and time-varying hazards without contagion ($\alpha(t)=0.4(1+\sin(2\pi t+\frac{\pi}{2}))$ and $\gamma(t)=0$), respectively. The effect of vaccination is the same across all scenarios with $e^{\beta_0}=e^{\beta_1}=0.2$ and $e^{\sigma}=0.5$. The individual covariates $(l_i,l_j)$ are correlated with $\rho=0.1$ and coefficients of $e^{\theta_0}=e^{\theta_1}=e^{\theta_2}=0.95$.  }
\label{tab:sim2}
\end{sidewaystable}

\begin{figure}
    \centering 
\begin{subfigure}{0.49\textwidth}
  \includegraphics[width=\linewidth]{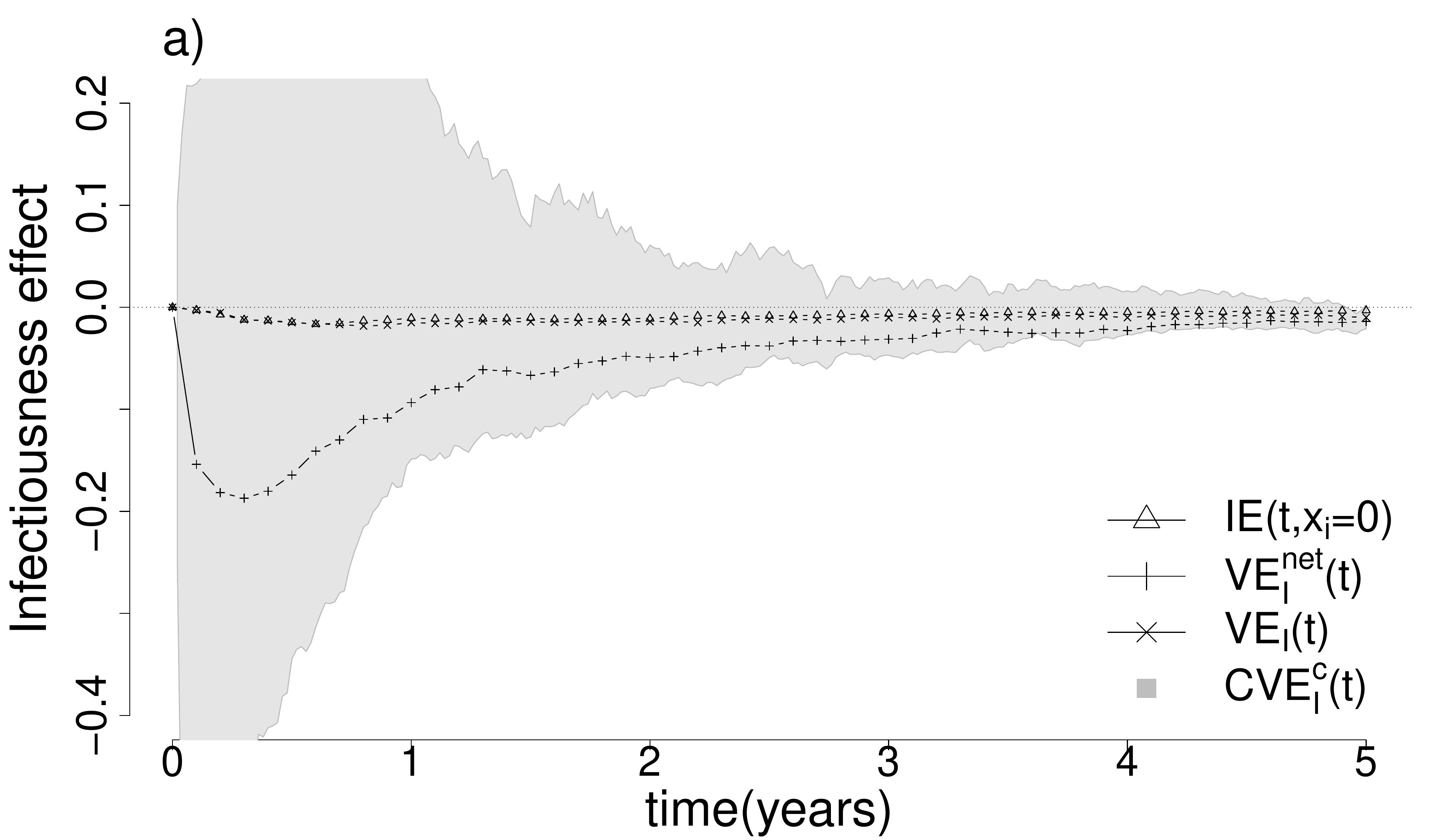}
\end{subfigure}\hfil 
\begin{subfigure}{0.49\textwidth}
  \includegraphics[width=\linewidth]{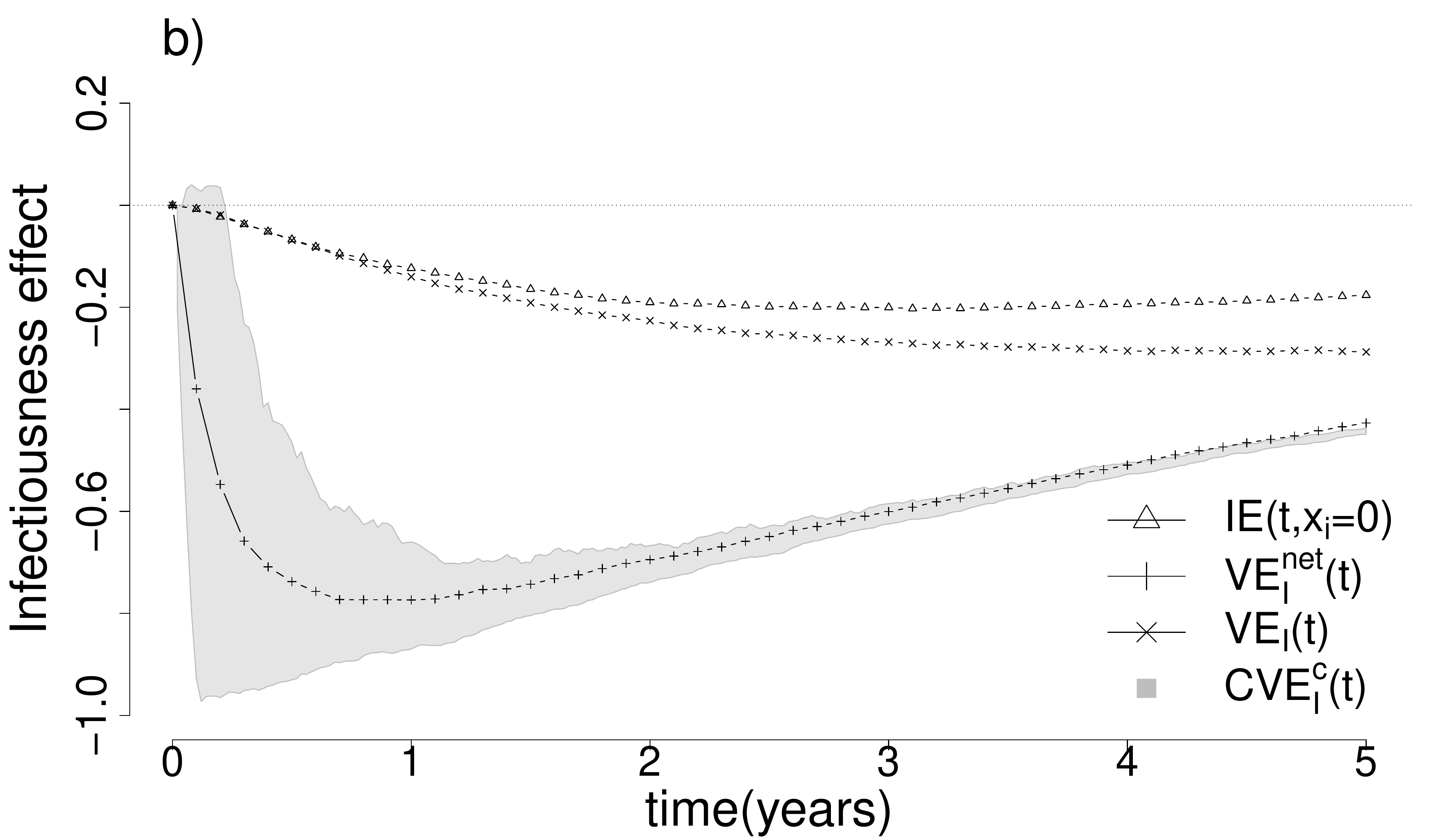}
\end{subfigure}\hfil 
\medskip
\begin{subfigure}{0.49\textwidth}
  \includegraphics[width=\linewidth]{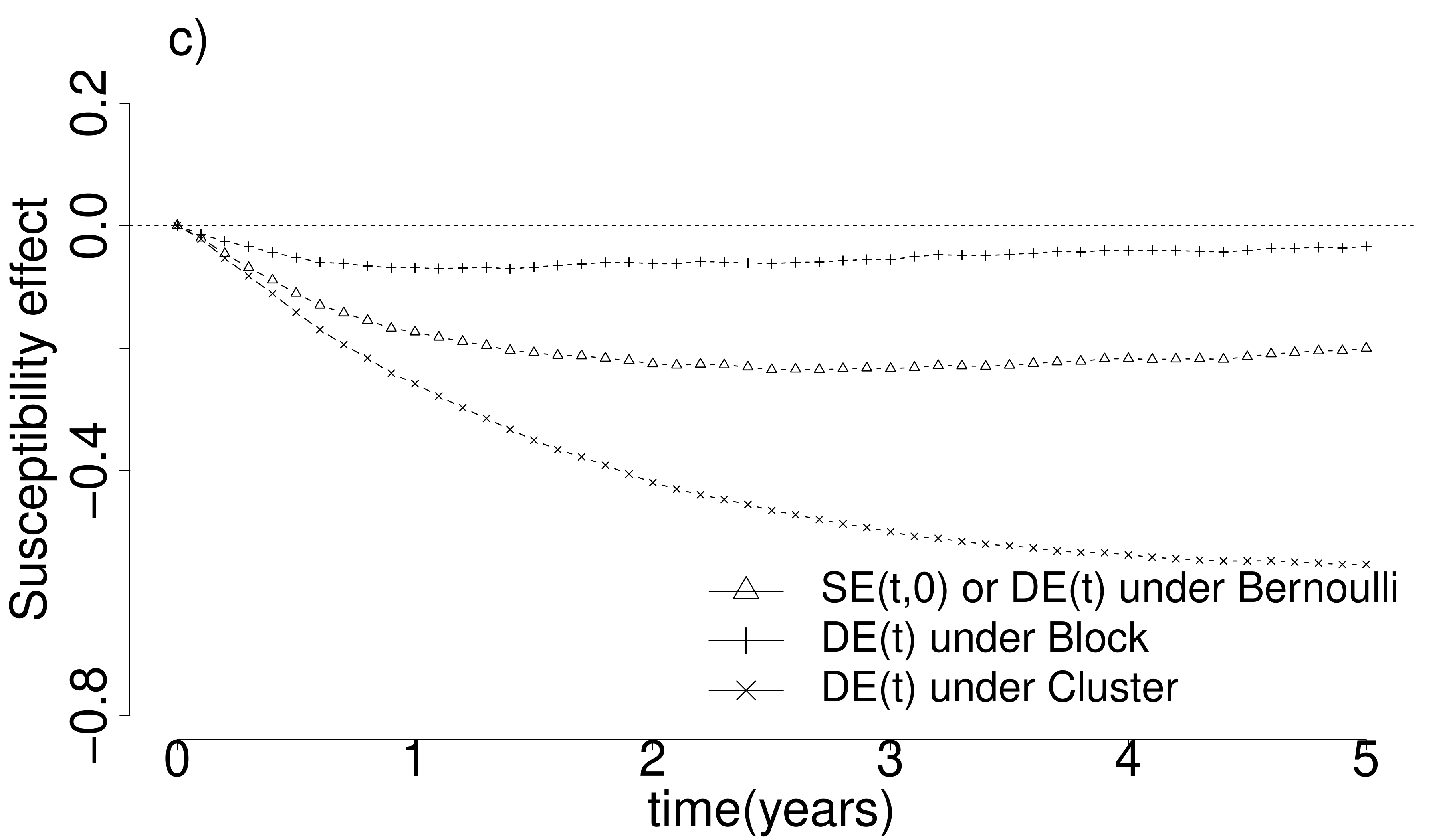}
\end{subfigure}\hfil 
\begin{subfigure}{0.49\textwidth}
  \includegraphics[width=\linewidth]{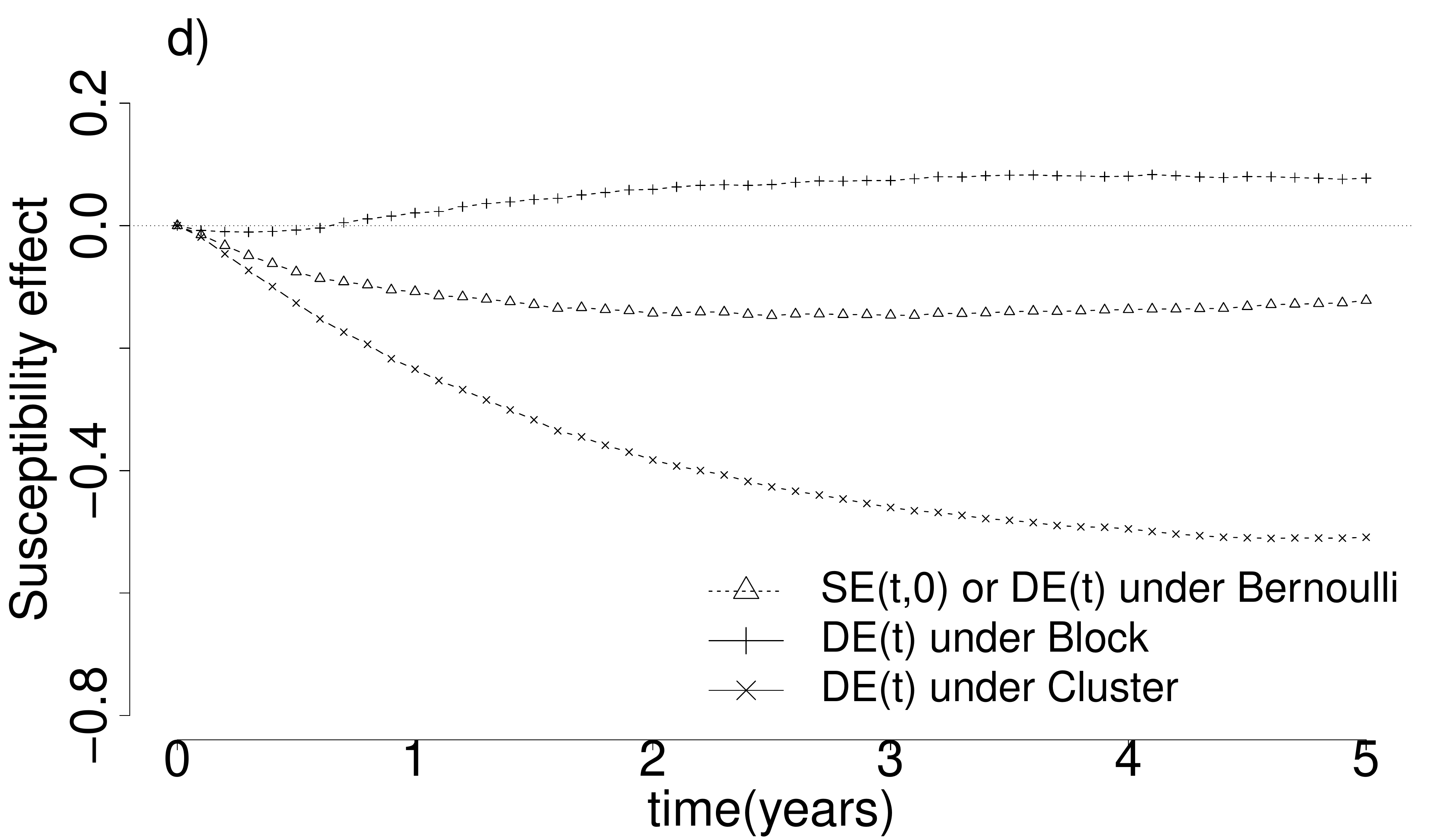}
\end{subfigure}\hfil 
\caption{Comparison of different natural infectiousness and susceptibility effects. Figure a) compares different natural infectiousness effects -- natural infectiousness effect $\IE(t,x_i=0)$, crude infectiousness effect $\VE_{\I}^{\net}(t)$, the infectiousness defined in mediation analysis $\VE_{\I}(t)$ and bounds identified by principal stratification -- when both true susceptibility effect and true infectiousness effect are beneficial ($e^{\beta}=0.2$, $e^{\sigma}=0.5$). Similarly, Figure b) shows the same comparison of multiple natural infectiousness effects as in Figure a) when the true infectiousness effect is much stronger than the true susceptibility effect ($e^{\beta}=0.4$, $e^{\sigma}=0.01$). Figure c) shows the comparison of different types of natural susceptibility effect -- the natural susceptibility effect $\SE(t,0)$, the crude susceptibility effect $\DE(t)$ under Bernoulli, Complete, and Cluster randomization -- when both true susceptibility effect and true infectiousness effect are beneficial ($e^{\beta}=0.2$, $e^{\sigma}=0.5$) as in Figure a). Likewise, Figure d) shows the same comparison of multiple natural susceptibility effects when the true infectiousness effect is much stronger than the true susceptibility effect ($e^{\beta}=0.4$, $e^{\sigma}=0.01$). All four graphs are under constant baseline hazards $\alpha(t)=0.2$ and $\gamma(t)=10$.}
\label{differentestimand}
\end{figure}


\section{Discussion}

We have described a nonparametric framework for identifying causal intervention effects under contagion in general two-person partnerships. The estimands and identification results generalize those given in prior work \citep{vanderweele2012components,vanderweele2011bounding,halloran2012causal,ogburn2017vaccines}, and establish that point identification of clinically meaningful causal estimands under contagion is possible even when relationships are symmetric and either individual can be treated. 
We have made no assumptions about the functional form of infection risks (beyond the independencies and exclusion restrictions implied by Assumptions \ref{as:exclusion}--\ref{as:crossworld}), how the risk of infection to a susceptible individual changes when their partner becomes infected, or how the vaccine changes susceptibility or infectiousness over time.  The framework respects the logic of infectious disease transmission: if the outcome is not transmissible, the contagion and infectiousness effects are zero. 

By studying the role of a partner's infection time in the identification of controlled causal effects, we can identify causal estimands that are both more fundamental and more directly linked to the biological effect of a vaccine on infection risk than simple contrasts of infection rates. Our results also show that while some crude contrasts can recover causal effects in restricted settings (e.g. the infectious effect $\VE_{\I}(t)$ in the asymmetric partnership setting) or under a particular randomization design (e.g. the direct effect $\DE(t)$ under independent Bernoulli randomization), they may not deliver useful summaries of vaccine effects in more general situations.  Finally, the framework developed in this paper may be useful in settings beyond infectious disease epidemiology, where symmetric mediated effects are of interest \citep[e.g.][]{sjolander2016carryover,sherman2018identification}.

One important limitation of our identification approach is that the controlled estimands and cross-world natural estimands require observation of infection times, and not just binary infection indicators at a follow-up time $t$.  In real-world vaccine trials, it may be unreasonable to require investigators to measure infection times $T_i$ with precision, as is required by Lemma \ref{lem:Fi} and Theorem \ref{thm:id}.  Instead, cross-sectional infection assessment, follow-up surveys, or tests for biomarkers of prior infection are commonly used as the primary outcome.  Corollary \ref{cor:idnatural} shows exactly how controlled effects that rely on infection times relate to natural effects that do not.  Attempts to disentangle individual effects from the mediating effects of treatment to partners using only binary infection outcomes may fail to recover useful controlled or marginal effects \citep[see, e.g. $\VE_{\I}^\text{net}$, analyzed by][]{halloran2012causal}.

Finally, while the symmetric partnership setting is useful for conceptualizing, defining, and identifying causal estimands, real-world vaccine trials usually happen in clusters of varying sizes. Adapting the setting outlined here to larger clusters results in rapid expansion of the number of potential outcomes, corresponding to every possible ordering of infections, necessitating simplifying structural assumptions to reduce the dimensionality of the problem.  One promising avenue for dramatically reducing the number of potential outcomes without imposing a parametric structure was proposed by \citet{kenah2013non,kenah2015semi}.  The idea is that contagion works by competing risks, where hazards of infection from different sources are additive.  This approach imposes no additional structure on the distribution of the initial time to infection, but assumes that new infected cluster members always add a competing risk of infection to the already existing risks of infection for susceptibles.  


\section*{Acknowledgements}

WWL was supported by NIH grant R01 AI085073 and by a Gillings Innovation Laboratory award from the UNC Gillings School of Global Public Health.
FWC was supported by NIH grants DP2 OD022614, R01 AI112438, and R01 AI112970.
We are grateful to 
Peter M. Aronow,
Soheil Eshghi,
Eben Kenah,
Olga Morozova,
and
Virginia E. Pitzer,
Li Zeng
for helpful comments and discussion. 




\bibliographystyle{abbrvnat}
\bibliography{contagion_counterfactual}

\appendix
\section{Proofs}

\begin{proof}[Proof of Lemma \ref{lem:Fi}]
  Let $f_i(w|x_i,\bl_i)$ be the density of $W_i(x_i)$ when $\bL_i=\bl_i$ and let $F_i(w|x_i,\bl_i)$ be the corresponding cumulative distribution function.  By Assumption \ref{as:consistency}, $0<F_i(w|x_i,\bl_i)<1$ for all $w>0$, $x_i$, and $\bl_i$, so we can write
  \[ \frac{f_i(w|x_i,\bl_i)}{1-F_i(w|x_i,\bl_i)} = -\frac{d}{dw} \log(1-F_i(w|x_i,\bl_i)) . \]
  Then rearranging, we have 
\begin{equation*}
\begin{split}
   F_i(w|x_i,\bl_i) &= 1 - \exp\left[ -\int_0^w \frac{f_i(u|x_i,\bl_i)}{1-F_i(u|x_i,\bl_i))} du \right] \\
                    &= 1 - \exp\left[ -\int_0^w \frac{f_i(u|x_i,\bl_i)(1-F_j(u|x_j,\bl_j))}{(1-F_i(u|x_i,\bl_i))(1-F_j(u|x_j,\bl_j))} du \right] \\
                    &= 1 - \exp\left[ -\int_0^w \frac{p(W_i(x_i)=u,W_j(x_j)>u|\X=(x_i,x_j),\bL=(\bl_i,\bl_j))}{\Pr(W_i(x_i)>u,W_j(x_j)>u|\X=(x_i,x_j),\bL=(\bl_i,\bl_j))} du \right] \\
                    &= 1 - \exp\left[ -\int_0^w \frac{p(T_i=u,T_j>u|\X=(x_i,x_j),\bL=(\bl_i,\bl_j))}{\Pr(T_i>u,T_j>u|\X=(x_i,x_j),\bL=(\bl_i,\bl_j))} du \right] ,
\end{split}
\end{equation*}
where $x_j$ is any fixed value of $X_j$ and $\bl_j$ is any fixed value of $\bL_j$. 
\end{proof}
\begin{lem}
Under Assumptions \ref{as:exclusion}--\ref{as:txignorability}, $Y_i(t;w_j,\x) \indep W_j(x_j) \mid \bL$ and $Y_i(t;w_j,\x) \indep \X | \bL$. 
\label{lem:outcomeindependence}
\end{lem}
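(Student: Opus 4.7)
The plan is to reduce both claims to joint conditional independencies of the primitive counterfactuals $W_i(x_i)$ and $Z_i(w_j,\x)$, then use the fact that measurable functions preserve independence. For fixed $t$, $w_j$, and $\x$, the decomposition in \eqref{eq:Tdef2} shows that $T_i(w_j,\x)$ — and hence $Y_i(t;w_j,\x) = \indicator{T_i(w_j,\x) < t}$ — is a deterministic measurable function of the pair $(W_i(x_i),\, Z_i(w_j,\x))$. Consequently, if I can establish
\[ (W_i(x_i),\, Z_i(w_j,\x)) \indep W_j(x_j) \mid \bL \quad\text{and}\quad (W_i(x_i),\, Z_i(w_j,\x)) \indep \X \mid \bL, \]
both halves of the lemma follow immediately.

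For the second joint independence, I would read Assumption \ref{as:txignorability} in the SWIG sense indicated by the right panel of Figure \ref{fig:dag_swig}: conditional on $\bL$, the exogenous noise driving the counterfactuals $W_i(x_i)$ and $Z_i(w_j,\x)$ is unaffected by the random treatment assignment $\X$. The two marginal statements $W_i(x_i) \indep \X \mid \bL$ and $Z_i(w_j,\x) \indep \X \mid \bL$ in Assumption \ref{as:txignorability} then extend to joint independence of the pair with $\X$ given $\bL$.

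For the first joint independence, the idea is to factor
\[ p(W_i(x_i),\, Z_i(w_j,\x),\, W_j(x_j) \mid \bL) = p(W_i(x_i) \mid \bL)\, p(Z_i(w_j,\x) \mid W_i(x_i), \bL)\, p(W_j(x_j) \mid W_i(x_i), Z_i(w_j,\x), \bL) \]
and show that the last factor equals $p(W_j(x_j) \mid \bL)$. Assumption \ref{as:exclusion} makes $W_j(x_j)$ a function of $\bL_j$ and $x_j$ alone (its exogenous noise is independent of $\bL_i$ given $\bL$), while Assumptions \ref{as:exclusion} and \ref{as:infignorability} together ensure that $W_j(x_j)$ shares no further dependence with the subject-$i$ counterfactuals $W_i(x_i)$ and $Z_i(w_j,\x)$ once we condition on $\bL$. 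Given the joint independence, both conclusions of the lemma follow because a function of a variable independent of $A$ (given $\bL$) is itself independent of $A$ (given $\bL$).

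The main obstacle is the step from the pairwise independencies stated in Assumptions \ref{as:exclusion}--\ref{as:txignorability} to the joint independencies $(W_i(x_i), Z_i(w_j,\x)) \indep W_j(x_j) \mid \bL$ and $(W_i(x_i), Z_i(w_j,\x)) \indep \X \mid \bL$, since pairwise conditional independence does not in general imply joint conditional independence. I would handle this by invoking the NPSEM/SWIG reading of the setup encoded in Figure \ref{fig:dag_swig}: conditional on $\bL$, the exogenous variables generating subject $i$'s waiting times and those generating subject $j$'s initial waiting time are mutually independent, and both are independent of the treatment assignment mechanism. Once that structural reading is in place, the rest of the argument is purely bookkeeping.
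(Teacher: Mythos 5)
Your proof is correct in substance but takes a genuinely different route from the paper's. The paper argues by a case split on the value of $W_i(x_i)$: on the event $\{W_i(x_i) < w_j\}$ the decomposition \eqref{eq:Tdef2} gives $T_i(w_j,\x) = W_i(x_i)$, and Assumption \ref{as:exclusion} (resp.\ Assumption \ref{as:txignorability}) supplies the independence from $W_j(x_j)$ (resp.\ from $\X$); on the complementary event $T_i(w_j,\x) = w_j + Z_i(w_j,\x)$ and Assumption \ref{as:infignorability} (resp.\ Assumption \ref{as:txignorability}) does the work. You instead reduce both claims to the joint statements $(W_i(x_i), Z_i(w_j,\x)) \indep W_j(x_j) \mid \bL$ and $(W_i(x_i), Z_i(w_j,\x)) \indep \X \mid \bL$ and then invoke preservation of independence under measurable maps. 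The subtlety you flag --- that the pairwise conditional independencies in the stated assumptions do not by themselves yield the joint version --- is real, and it is in fact present in the paper's own argument as well: the contribution from the second case, $\Pr(Z_i(w_j,\x) < t - w_j,\, W_i(x_i) \ge w_j,\, W_j(x_j) \in S \mid \bL)$, requires the pair $(W_i(x_i), Z_i(w_j,\x))$ to decouple from $W_j(x_j)$, not merely each coordinate separately, since the event $\{W_i(x_i)\ge w_j\}$ ties the two cases together. The paper passes over this silently; you close it by reading Figure \ref{fig:dag_swig} as a structural model with mutually independent exogenous errors given $\bL$. That is a reasonable repair, but note that it imports an assumption strictly stronger than the literal text of Assumptions \ref{as:exclusion}--\ref{as:txignorability}; if you want the lemma to follow from the stated assumptions alone, the cleanest fix is to state Assumptions \ref{as:exclusion} and \ref{as:infignorability} directly in the joint form you wrote down. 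What your route buys is transparency about exactly which joint independencies are needed; what the paper's route buys is brevity and a proof that visibly applies each assumption exactly where the decomposition \eqref{eq:Tdef2} switches regimes.
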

\begin{proof}[Proof of Lemma \ref{lem:outcomeindependence}]
Fix a value $w_j>0$ and let $\x=(x_i,x_j)$.  If $W_i(x_i) < w_j$, then $T_i(w_j,\x) = W_i(x_i)$ and by Assumption \ref{as:exclusion}, $W_i(x_i) \indep W_j(x_j) \mid \bL$, so $T_i(w_j,\x) \indep W_j(x_j) \mid \bL$.  If $W_i(x_i)>w_j$ then $T_i(w_j,\x) = w_j+Z_i(w_j,\x)$ and by Assumption \ref{as:infignorability} $Z_i(w_j,\x) \indep W_j \mid \bL$, so $T_i(w_j,\x) \indep W_j(x_j) \mid \bL$. Therefore, since $Y_i(t;w_j,\x) = \indicator{T_i(w_j,\x)<t}$, it follows that $Y_i(t;w_j,\x) \indep W_j(x_j) \mid \bL$. 
 
By the same reasoning, if $W_i(x_i) < w_j$, then $T_i(w_j,\x) = W_i(x_i)$ and by Assumption \ref{as:txignorability}, $W_i(x_i) \indep \X \mid \bL$.  If $W_i(x_i)>w_j$ then $T_i(w_j,\x) = w_j+Z_i(w_j,\x)$ and by Assumption \ref{as:txignorability}, $Z_i(w_j,\x) \indep X \mid \bL$. Therefore, since $Y_i(t;w_j,\x) = \indicator{T_i(w_j,\x)<t}$, it follows that $Y_i(t;w_j,\x) \indep \X \mid \bL$. 
\end{proof}
\begin{lem}
Under Assumptions \ref{as:exclusion}-\ref{as:consistency}, $\E[Y_i(t,w_j,\x)]=\E[Y_i(t)|W_j=w_j,\X=\x]$.
\label{le:consistency}
\end{lem}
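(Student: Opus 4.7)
The plan is to reduce the statement to a composition of the consistency assumption (which aligns the observed and potential infection times on the conditioning event) with the joint exchangeability delivered by the preceding Lemma \ref{lem:outcomeindependence}. I read the stated equality as holding pointwise in $\bL$ (and thus in particular after integration against the baseline distribution of $\bL$), since this is the form in which it feeds into the proof of Theorem \ref{thm:id}.

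First I would handle consistency. On the event $\{\X = \x,\ W_j = w_j\}$, Assumption \ref{as:consistency} gives $W_i = W_i(x_i)$, $W_j = W_j(x_j) = w_j$, and, when $W_i \ge w_j$, $Z_i = Z_i(w_j, \x)$. Reading off the decomposition \eqref{eq:Tdef} of $T_i$ case by case and comparing to \eqref{eq:Tdef2} shows $T_i = T_i(w_j, \x)$ almost surely on this event, so $Y_i(t) = Y_i(t; w_j, \x)$. Taking conditional expectations (also on $\bL$),
\[ \E[Y_i(t) \mid W_j = w_j,\ \X = \x,\ \bL] = \E[Y_i(t; w_j, \x) \mid W_j = w_j,\ \X = \x,\ \bL]. \]

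Second I would remove the conditioning on $W_j$ and $\X$. Lemma \ref{lem:outcomeindependence} provides $Y_i(t; w_j, \x) \indep W_j(x_j) \mid \bL$ and $Y_i(t; w_j, \x) \indep \X \mid \bL$ separately; I would promote these to the joint independence $Y_i(t; w_j, \x) \indep (\X, W_j(x_j)) \mid \bL$ by noting that $Y_i(t; w_j, \x)$ is a deterministic function of $W_i(x_i)$ and $Z_i(w_j,\x)$, both of which are independent of $(\X, W_j(x_j))$ given $\bL$ under Assumptions \ref{as:exclusion}--\ref{as:txignorability} (equivalently, by d-separation in the SWIG of Figure \ref{fig:dag_swig}). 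Combined with consistency --- so that on $\{X_j = x_j\}$ the event $\{W_j = w_j\}$ coincides with $\{W_j(x_j) = w_j\}$ --- this yields
\[ \E[Y_i(t; w_j, \x) \mid W_j = w_j,\ \X = \x,\ \bL] = \E[Y_i(t; w_j, \x) \mid \bL]. \]
Chaining the two displays gives the $\bL$-conditional form of the lemma, and the unconditional statement follows by averaging over the distribution of $\bL$.

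The main obstacle is the promotion of the two pairwise independencies from Lemma \ref{lem:outcomeindependence} to a joint independence; this step is where the causal structure imposed by Assumptions \ref{as:exclusion}--\ref{as:txignorability} really does the work, and I would spell it out via the NPSEM/SWIG interpretation rather than try to deduce it from the two marginal statements alone. A secondary bookkeeping point is clarifying whether the lemma is meant marginally or conditionally on $\bL$: the version actually invoked inside Theorem \ref{thm:id} is the $\bL$-conditional one, and the marginal statement should be read as the result of integrating both sides against the common distribution of $\bL$.
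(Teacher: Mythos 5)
Your proposal is correct and is essentially the paper's own argument run in the reverse direction: the paper starts from $\E[Y_i(t;w_j,\x)]$, inserts the conditioning on $W_j=w_j$ and $\X=\x$ one variable at a time via Assumptions \ref{as:infignorability} and \ref{as:txignorability} (splitting on whether $W_i(x_i)\ge w_j$), and only then applies consistency, whereas you apply consistency on the conditioning event first and then strip the conditioning using the independencies packaged in Lemma \ref{lem:outcomeindependence}. Your explicit concern about promoting the two pairwise independencies to a joint one is well placed but not a departure --- the paper's sequential conditioning step implicitly relies on the same joint (equivalently, conditional-given-the-other) independence, which follows from the structural Assumptions \ref{as:exclusion}--\ref{as:txignorability} exactly as you describe --- and your reading of the lemma as holding conditionally on $\bL$ matches how it is invoked in the proof of Theorem \ref{thm:id}.
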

\begin{proof}[Proof of Lemma \ref{le:consistency}]
Fix a value $w_j>0$ and $\x=(x_i,x_j)$. If $W_i(x_i) \ge w_j$ then
\begin{equation*}
\begin{split}
\E[Y_i(t,w_j,\x)] &= \Pr(T_i(w_j,\x) <t)  \text{ by the definition of $Y_i(t,w_j,\x)$} \\
    &= \Pr(w_j+Z_i(w_j,\x)<t) \text{ by the definition of $T_i(w_j,\x)$ and $W_i(x_i) \ge w_j$} \\
    &= \Pr(Z_i(w_j,\x) < t-w_j) \\
    &= \Pr(Z_i(w_j,\x) < t-w_j | W_j=w_j) \text{ by Assumption \ref{as:infignorability}} \\
    &= \Pr(Z_i(w_j,\x) < t-w_j | W_j=w_j, \X=\x) \text{ by Assumption \ref{as:txignorability}} \\
    &= \Pr(Z_i < t-w_j | W_j=w_j, \X=\x) \text{ by Assumption \ref{as:consistency}}\\
    &= \Pr(Z_i < t-W_j | W_j=w_j, \X=\x) \\
    &= \Pr(Z_i +W_j < t | W_j=w_j, \X=\x) \\
    &= \Pr(T_i <t| W_j=w_j, \X=\x) \text{ by the definition of $T_i$}\\
    &= \E[Y_i(t)| W_j=w_j, \X=\x] \text{ by the definition of $Y_i(t)$}
\end{split}
\end{equation*}
If $W_i(x_i)<w_j$ then
\begin{equation*}
\begin{split}
\E[Y_i(t,w_j,\x)] &= \Pr(T_i(w_j,\x) <t)  \text{ by the definition of $Y_i(t,w_j,\x)$} \\
    &= \Pr(W_i(x_i)<t) \text{ by the definition of $T_i(w_j,\x)$ and $W_i(x_i)<w_j$} \\
    &= \Pr(W_i(x_i)<t|X_i=x_i,X_j=x_j) \text{ by Assumption \ref{as:txignorability}}\\
    &= \Pr(W_i(x_i)<t|W_j=w_j,X_i=x_i,X_j=x_j) \text{ by Assumption \ref{as:exclusion}}\\
    &= \Pr(W_i<t|W_j=w_j,\X=\x) \text{ by Assumption \ref{as:consistency}}\\
    &= \Pr(T_i <t| W_j=w_j, \X=\x) \text{ by the definition of $T_i$}\\
    &= \E[Y_i(t)| W_j=w_j, \X=\x] \text{ by the definition of $Y_i(t)$}
\end{split}
\end{equation*}
\end{proof}
\begin{proof}[Proof of Theorem \ref{thm:id}]
The average potential infection outcome when $\bL=\bl$ is given by 
\begin{equation*}
\begin{split}
    & \E[Y_i(t;w_j,\x)|\bL=\bl]= \E[Y_i(t;w_j,\x)|W_j=w_j,\X=\x,\bL=\bl] \\
    \intertext{by Lemma \ref{lem:outcomeindependence}}
    &= \E[Y_i(t;w_j,\x)|W_i\le w_j,W_j=w_j,\X=\x,\bL=\bl] \Pr(W_i\le w_j|W_j=w_j,\X=\x,\bL=\bl]) \\
    &\quad + \E[Y_i(t;w_j,\x)|W_i>w_j,W_j=w_j,\X=\x,\bL=\bl] \Pr(W_i>w_j|W_j=w_j,\X=\x,\bL=\bl]) \\
    &= \Pr(T_i(w_j,\x)<t|W_i\le w_j,W_j=w_j,\X=\x,\bL=\bl) \Pr(W_i\le w_j|W_j=w_j,\X=\x,\bL=\bl]) \\
    &\quad + \E[Y_i(t;w_j,\x)|W_i>w_j,W_j=w_j,\X=\x,\bL=\bl] \Pr(W_i> w_j|W_j=w_j,\X=\x,\bL=\bl]) \\
    \intertext{by the definition of $Y_i(t;w_j,\x)$}
    &= \Pr(W_i(x_i)<t|W_i\le w_j,W_j=w_j,\X=\x,\bL=\bl) \Pr(W_i\le w_j|W_j=w_j,\X=\x,\bL=\bl]) \\
    &\quad + \E[Y_i(t;w_j,\x)|W_i>w_j,W_j=w_j,\X=\x,\bL=\bl] \Pr(W_i> w_j|W_j=w_j,\X=\x,\bL=\bl]) \\
    \intertext{by the definition of $T_i(w_j,\x)$}
    &= \Pr(W_i(x_i)<t|W_i\le w_j,X_i=x_i,\bL=\bl) \Pr(W_i\le w_j|X_i=x_i,\bL=\bl]) \\
    &\quad + \E[Y_i(t;w_j,\x)|W_i>w_j,W_j=w_j,\X=\x,\bL=\bl] \Pr(W_i> w_j|X_i=x_i,\bL=\bl]) \\
    \intertext{by Assumption \ref{as:exclusion}}
    &= \Pr(W_i<t|W_i\le w_j,X_i=x_i,\bL=\bl) \Pr(W_i\le w_j|X_i=x_i,\bL=\bl]) \\
    &\quad + \E[Y_i(t)|W_i>w_j,W_j=w_j,\X=\x,\bL=\bl] \Pr(W_i> w_j|X_i=x_i,\bL=\bl]) \\
    \intertext{by Assumption \ref{as:consistency} and Lemma \ref{le:consistency}}
    &= \Pr(W_i<t, W_j \le w_j|X_i=x_i,\bL=\bl) \\
    &\quad + \E[Y_i(t)|W_i>w_j,W_j=w_j,\X=\x,\bL=\bl] \Pr(W_i> w_j|X_i=x_i,\bL=\bl]) \\
\end{split}
\label{maineq:the1}
\end{equation*}
When $t \ge w_j$, then    
\begin{equation*}
\begin{split}
\E[Y_i(t;w_j,\x)|\bL=\bl]  &= \Pr(W_i<t, W_j \le w_j|X_i=x_i,\bL=\bl) \\
    &\quad + \E[Y_i(t)|W_i>w_j,W_j=w_j,\X=\x,\bL=\bl] \Pr(W_i> w_j|X_i=x_i,\bL=\bl]) \\
    &= \Pr(W_i\le w_j|X_i=x_i,\bL=\bl]) \\
    &\quad + \E[Y_i(t)|W_i>w_j,W_j=w_j,\X=\x,\bL=\bl] \Pr(W_i> w_j|X_i=x_i,\bL=\bl]) \\
    &= F_i(w_j|x_i,\bl_i) + (1-F_i(w_j|x_i,\bl_i))\E[Y_i(t)|W_i>w_j,W_j=w_j,\X=\x,\bL=\bl].
\end{split}
\end{equation*}
Likewise, when $t <w_j$, then 
\begin{equation*}
\begin{split}
\E[Y_i(t;w_j,\x)|\bL=\bl] &= \Pr(W_i<t, W_j \le w_j|X_i=x_i,\bL=\bl) \\
    &\quad + \E[Y_i(t)|W_i>w_j,W_j=w_j,\X=\x,\bL=\bl] \Pr(W_i> w_j|X_i=x_i,\bL=\bl]) \\
    &= \Pr(W_i < t|X_i=x_i,\bL_i=\bl_i])  \\
    &\quad + \E[Y_i(t)|W_i>w_j,W_j=w_j,\X=\x,\bL=\bl] \Pr(W_i>w_j|X_i=x_i,\bL_i=\bl_i]) \\
    &= \Pr(W_i \le t|X_i=x_i,\bL_i=\bl_i])  \\
    \intertext{since $\E[Y_i(t)|W_i>w_j,W_j=w_j,\X=\x,\bL=\bl]=0$ when $t<w_j$}
    &= F_i(w_j|x_i,\bl_i) .
\end{split}
\end{equation*}
\end{proof}
\begin{proof}[Proof of Corollary \ref{cor:idnatural}]
\begin{equation*}
\begin{split}
  \E[Y_i(t;W_j(x_j),\x)|\bL=\bl] &= \E \big{[} \E[Y_i(t;W_j(x_j),\x)|\bL=\bl]  \big{]}  \\
   &= \int_0^\infty \E[Y_i(t;u,\x)|W_j=u,\X=\x,\bL=\bl] dF_j(u|x_j,\bl_i) \text{ by Assumption \ref{as:exclusion}}\\
   &= \int_0^\infty \E[Y_i(t)|W_j=u,\X=\x,\bL=\bl] dF_j(u|x_j,\bl_i) \text{ by Lemma \ref{le:consistency} and Assumption \ref{as:consistency}} \\
   &= \E[Y_i(t)|\X=\x,\bL=\bl]. \\
\end{split}
\end{equation*}
Likewise, when $\x=(x_i,x_j)$ and $x_j'\neq x_j$, 
\begin{equation*}
\begin{split}
  \E[Y_i(t;W_j(x_j'),\x|\bL=\bl] &= \E \big{[} \E[Y_i(t;W_j(x_j),\x|\bL=\bl]  \big{]} \\
   &= \int_0^\infty \E[Y_i(t;u,\x)|W_j=u,\X=\x,\bL=\bl] dF_j(u|x'_j,\bl_i) \text{ by Assumption \ref{as:exclusion}} \\
   &= \int_0^\infty \E[Y_i(t)|W_j=u,\X=\x,\bL=\bl] dF_j(u|x'_j,\bl_i) \text{ by Lemma \ref{le:consistency} and Assumption \ref{as:consistency}} \\
\end{split}
\end{equation*}
\end{proof}
\begin{lem}
When $\SE(t,w_j,x_j)=0$, then $F_j(t|x_j)=F_j(t|1-x_j)$ and $\E[Y_i(t)|X_i=1,X_j=x_j]=\E[Y_i(t)|X_i=0,X_j=x_j]$, for all $x_j \in \{0,1\}$ and $t \ge 0$. 

When $\SE(t,w_j,x_j)=\IE(t,w_j,x_i)=0$, then $\E[Y_i(t)|X_i=0,X_j=1] = \E[Y_i(t)|X_i=0,X_j=0]$.

When $\SE(t,w_j,x_j)=0$ and $\IE(t,w_j,x_i)<0$, then $\E[Y_i(t)|X_i=0,X_j=1] < \E[Y_i(t)|X_i=0,X_j=0]$.
\label{lem:fortheorem3_1}
\end{lem}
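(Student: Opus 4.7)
The plan is to use Theorem \ref{thm:id} to translate the null controlled effects into pointwise invariance of the key observable quantities, and then to carry that invariance through Corollary \ref{cor:idnatural}. Throughout I work marginally in $\bL$, consistent with the excerpt's convention for this section.

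\textbf{First claim of Part 1.} Choosing $w_j > t$ collapses the formula in Theorem \ref{thm:id} to $\E[Y_i(t;w_j,\x)] = F_i(t\mid x_i)$. Hence $\SE(t,w_j,x_j)=0$ forces $F_i(t\mid 1) = F_i(t\mid 0)$ for every $t$. By the symmetry of the two-person partnership labeling, the identical argument applied with $i$ and $j$ swapped gives $F_j(t\mid x_j) = F_j(t\mid 1-x_j)$.

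\textbf{Second claim of Part 1.} For $w_j < t$, Theorem \ref{thm:id} gives $\E[Y_i(t;w_j,\x)] = F_i(w_j\mid x_i) + (1-F_i(w_j\mid x_i))\,\E[Y_i(t)\mid T_i\ge w_j, T_j=w_j, \X=\x]$. Combining $\SE(t,w_j,x_j)=0$ with the invariance of $F_i$ just established and with Assumption \ref{as:positivity} yields $\E[Y_i(t)\mid T_i\ge w_j, T_j=w_j, \X=(1,x_j)] = \E[Y_i(t)\mid T_i\ge w_j, T_j=w_j, \X=(0,x_j)]$. Taken together, the controlled potential outcome $\E[Y_i(t;w_j,(x_i,x_j))]$ is invariant in $x_i$ for every $w_j$. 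Corollary \ref{cor:idnatural} equates $\E[Y_i(t)\mid X_i=x_i,X_j=x_j]$ with $\E[Y_i(t; W_j(x_j),(x_i,x_j))]$, which (using Lemma \ref{lem:outcomeindependence}) can be written as an integral of $\E[Y_i(t; w_j,(x_i,x_j))]$ against $dF_j(w_j\mid x_j)$. Since neither the integrand nor the integrator depends on $x_i$, the observable equality follows.

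\textbf{Parts 2 and 3.} The hypothesis $\IE(t,w_j,0)=0$ says $\E[Y_i(t;w_j,(0,1))] = \E[Y_i(t;w_j,(0,0))]$ pointwise in $w_j$, while Part 1 gives $F_j(\cdot\mid 0) = F_j(\cdot\mid 1)$. Writing $\E[Y_i(t)\mid X_i=0,X_j=x_j]$ via Corollary \ref{cor:idnatural} as an integral of the controlled potential outcome against $F_j(\cdot\mid x_j)$, both the integrand and the integrator coincide for $x_j\in\{0,1\}$, giving Part 2. Part 3 uses the same representation; $\IE<0$ supplies strict inequality of the integrands on a set of positive $F_j$-measure (guaranteed by Assumption \ref{as:positivity}), while equality of the integrators is preserved, so the strict inequality propagates through the integral.

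\textbf{Main obstacle.} The delicate step is extracting pointwise-in-$w_j$ invariance of the post-first-infection conditional expectation $\E[Y_i(t)\mid T_i\ge w_j, T_j=w_j, \X]$ from a hypothesis that nominally controls only an integrated contrast of controlled potential outcomes. The trick is to exploit the freedom to vary $w_j$: first take $w_j > t$ to isolate $F_i$ via the $t\le w_j$ branch of Theorem \ref{thm:id}, then switch to the branch $w_j < t$ once $F_i$ is known to be invariant in $x_i$, so that the remaining unknown piece of the decomposition can be read off algebraically.
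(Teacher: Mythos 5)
Your proof is correct and follows essentially the same route as the paper's: isolate $F_i(t\mid x_i)$ by taking $w_j>t$ (the paper equivalently sets $w_i=\infty$), then write each observable conditional expectation as an integral of the controlled potential outcome against $dF_j(\cdot\mid x_j)$, applying the null or negative controlled effect to the integrand and the invariance of $F_j$ from Part 1 to the integrator. The only superfluous element is your ``main obstacle'': since $\SE(t,w_j,x_j)=0$ is already a pointwise-in-$w_j$ statement about controlled potential outcomes (the expectation averages over the population, not over $w_j$), the detour through extracting invariance of $\E[Y_i(t)\mid T_i\ge w_j,T_j=w_j,\X]$ is not needed.
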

\begin{proof}[Proof of Lemma \ref{lem:fortheorem3_1}]
First we prove $F_j(t|x_j)=F_j(t|1-x_j)$, for all $x_j \in \{0,1\}$ when $\SE(t,w_j,x_j)=0$.
\begin{equation}
\begin{split}
F_j(t|x_j) &=\Pr(W_j(x_j)<t) = \Pr(T_j(w_i=\infty,x_i,x_j)<t) \text{ by the definition of $T_j(w_i,x_j,x_i)$}\\
    &=\E[Y_j(t;w_i=\infty,x_j,x_i)] \text{ by the definition of $Y_j(u;w_i,x_j,x_i)$}
\\
    &=\E[Y_i(t;w_i=\infty,x'_j,x_i)] \text{ since $\SE(t,w_j,x_j)=0$} \\
    &= \Pr(W_j(x'_j)<t)=F_j(t|x'_j).
\end{split}
\label{eq:seequal1}
\end{equation}
Second, we prove $\E[Y_i(t)|X_i=1,X_j=x_j]=\E[Y_i(t)|X_i=0,X_j=x_j]$ for all $x_j \in \{0,1\}$, if $\SE(t,w_j,x_j)=0$.
\begin{equation}
\begin{split}
\E[Y_i(t)|X_i=1,X_j=x_j] &= \int_0^\infty \E[Y_i(t)|W_j=u,X_i=1,X_j=x_j] dF_j(u|x_j) \text{ by Assumption \ref{as:exclusion}}\\
    &= \int_0^\infty \E[Y_i(t;u,x_i=1,x_j)] dF_j(u|x_j) \text{ by Lemma \ref{le:consistency}}\\
    &= \int_0^\infty \E[Y_i(t;u,x_i=0,x_j)] dF_j(u|x_j) \text{ since $\SE(t,w_j,x_j)=0$} \\
    &= \E[Y_i(t)|X_i=0,X_j=x_j].
\end{split}
\label{eq:seequal0}
\end{equation}

Third, by (\ref{eq:seequal1}), we prove $\E[Y_i(t)|X_i=0,X_j=1] = \E[Y_i(t)|X_i=0,X_j=0]$, if $\SE(t,w_j,x_j)=\IE(t,w_j,x_i)=0$.
\begin{equation}
\begin{split}
\E[Y_i(t)|X_i=0,X_j=1] &= \int_0^t \E[Y_i(t)|W_j=u,X_i=0,X_j=1] dF_j(u|1) \text{ by Assumption \ref{as:exclusion}}\\
    &= \int_0^\infty \E[Y_i(t;u,x_i=0,x_j=1)] dF_j(u|1) \text{ by Lemma \ref{le:consistency}} \\ 
    &= \int_0^\infty \E[Y_i(t;u,x_i=0,x_j=0)] dF_j(u|1) \text{ since $\IE(t,w_j,x_i)=0$} \\
    &= \int_0^\infty \E[Y_i(t;u,x_i=0,x_j=0)] dF_j(u|0) \text{ by (\ref{eq:seequal1})}\\
    &= \int_0^\infty \E[Y_i(t)|W_j=u,X_i=0,x_j=0)] dF_j(u|0) \text{ by Lemma \ref{le:consistency}}\\
    &= \E[Y_i(t)|X_i=0,X_j=0].
\end{split}
\label{eq:se}
\end{equation} 

Fourth, by (\ref{eq:seequal1}), we prove $\E[Y_i(t)|X_i=0,X_j=1] < \E[Y_i(t)|X_i=0,X_j=0]$, if $\SE(t,w_j,x_j)=0$ and $\IE(t,w_j,x_i)<0$.
\begin{equation}
\begin{split}
\E[Y_i(t)|X_i=0,X_j=1] &= \int_0^t \E[Y_i(t)|W_j=u,X_i=0,X_j=1] dF_j(u|1) \text{ by Assumption \ref{as:exclusion}}\\
    &= \int_0^\infty \E[Y_i(t;u,x_i=0,x_j=1)] dF_j(u|1) \text{ by Lemma \ref{le:consistency}} \\ 
    &< \int_0^\infty \E[Y_i(t;u,x_i=0,x_j=0)] dF_j(u|1) \text{ since $\IE(t,w_j,x_i)<0$} \\
    &= \int_0^\infty \E[Y_i(t;u,x_i=0,x_j=0)] dF_j(u|0) \text{ by (\ref{eq:seequal1})}\\
    &= \int_0^\infty \E[Y_i(t)|W_j=u,X_i=0,x_j=0)] dF_j(u|0) \text{ by Lemma \ref{le:consistency}}\\
    &= \E[Y_i(t)|X_i=0,X_j=0].
\end{split}
\label{ieq:se}
\end{equation}
\end{proof}

\begin{proof}[Proof of Theorem \ref{thm:VEAR}]

Given the conclusions from (\ref{eq:seequal0}) and (\ref{ieq:se}), we have
\begin{equation}
\begin{split}
\DE(t) &= \E[Y_i(t)|X_i=1] - \E[Y_i(t)|X_i=0] \\
      &= \E[Y_i(t)|X_i=1,X_j=1] \Pr(X_j=1|X_i=1) + \E[Y_i(t)|X_i=1,X_j=0] \Pr(X_j=0|X_i=1) \\
      & \quad -\E[Y_i(t)|X_i=0,X_j=1] \Pr(X_j=1|X_i=0)- \E[Y_i(t)|X_i=0,X_j=0] \Pr(X_j=0|X_i=0) \\
      &= \E[Y_i(t)|X_i=0,X_j=1] \Pr(X_j=1|X_i=1) + \E[Y_i(t)|X_i=0,X_j=0] \Pr(X_j=0|X_i=1) \\
      & \quad -\E[Y_i(t)|X_i=0,X_j=1] \Pr(X_j=1|X_i=0)- \E[Y_i(t)|X_i=0,X_j=0] \Pr(X_j=0|X_i=0) \\
      \intertext{by (\ref{eq:seequal0}) in Lemma \ref{lem:fortheorem3_1}
}
      &= \E[Y_i(t)|X_i=0,X_j=1] \Big{[} \Pr(X_j=1|X_i=1) - \Pr(X_j=1|X_i=0)\Big{]}\\
      & \quad  + \E[Y_i(t)|X_i=0,X_j=0] \Big{[} \Pr(X_j=0|X_i=1) - \Pr(X_j=0|X_i=0)\Big{]}\\
      &= \E[Y_i(t)|X_i=0,X_j=1] \Big{[} \Pr(X_j=1|X_i=1) - \Pr(X_j=1|X_i=0)\big{]}\\
      &\quad + \E[Y_i(t)|X_i=0,X_j=0] \Big{\{} [1-\Pr(X_j=1|X_i=1)] - [1-\Pr(X_j=1|X_i=0)]\Big{\}}\\
      &= \Big{\{} \E[Y_i(t)|X_i=0,X_j=1]- \E[Y_i(t)|X_i=0,X_j=0] \Big{\}} \cdot  \Big{[} \Pr(X_j=1|X_i=1) - \Pr(X_j=1|X_i=0)\Big{]}
\end{split}
\label{mainproof3}
\end{equation}

Note by (\ref{ieq:se}) in Lemma \ref{lem:fortheorem3_1}, we have the first term at the last line of (\ref{mainproof3}) being negative. The sign of $\DE(t)$ then depends only on the treatment assignment mechanism, which leads to the following conclusions for $\DE(t)$.

1. If the treatment assignment is positively correlated ($\Pr(X_i=c,X_j=c) > \Pr(X_i=c) \Pr(X_j=c)$ for $c\in{0,1}$), we have:
\begin{equation}
\begin{split}
& \Pr(X_j=1|X_i=1) - \Pr(X_j=1|X_i=0) \\
&= \frac{\Pr(X_j=1,X_i=1)}{\Pr(X_i=1)}- \frac{\Pr(X_j=1,X_i=0)}{\Pr(X_i=0)} \\
&=\frac{\Pr(X_j=1,X_i=1)\Pr(X_i=0)-\Pr(X_j=1,X_i=0)\Pr(X_i=1)}{\Pr(X_i=1)\Pr(X_i=0)} \\
&=\frac{\Pr(X_j=1,X_i=1)[1-\Pr(X_i=1)]-\Pr(X_j=1,X_i=0)\Pr(X_i=1)}{\Pr(X_i=1)\Pr(X_i=0)} \\
&=\frac{\Pr(X_j=1,X_i=1)-\Pr(X_j=1,X_i=1)\Pr(X_i=1)-\Pr(X_j=1,X_i=0)\Pr(X_i=1)}{\Pr(X_i=1)\Pr(X_i=0)} \\
&=\frac{\Pr(X_j=1,X_i=1)-\Pr(X_j=1\Pr(X_i=1)}{\Pr(X_i=1)\Pr(X_i=0)} \ge 0 \\
\end{split}
\label{treatmentassignment}
\end{equation}
Thus, $\DE(t)<0$.

2. If the treatment assignment is independent ($\Pr(X_i=c,X_j=c) = \Pr(X_i=c) \Pr(X_j=c)$ for $c\in{0,1}$), then by similar arguments of (\ref{treatmentassignment}), we have $\Pr(X_j=1|X_i=1) -\Pr(X_j=1|X_i=0) =0$. Thus, $\DE(t)=0$.

3. If the treatment assignment is negatively correlated ($\Pr(X_i=c,X_j=c) < \Pr(X_i=c) \Pr(X_j=c)$ for $c\in{0,1}$), then  by similar arguments of (\ref{treatmentassignment}), we have $\Pr(X_j=1|X_i=1) -\Pr(X_j=1|X_i=0) <0$. Thus, $\DE(t)>0$.

When $\IE(t,w_j,x_i)=0$, following (\ref{eq:se}) and (\ref{mainproof3}) in Lemma \ref{lem:fortheorem3_1}, we have $\E[Y_i(t)|X_i=0,X_j=1]=\E[Y_i(t)|X_i=0,X_j=0]$ and thus $\DE(t)=0$.

Similar arguments apply for $\VE_{\AR}(t)$.
\end{proof}
\begin{proof}[Proof of Theorem \ref{thm:VEInet}]
We evaluate the sign of $\VE_I^{net}(t)$ by analyzing $\SAR_{00}(t)-\SAR_{10}(t)$.
\[
\VE_I^{net}(t)=1-\frac{\SAR_{10}(t)}{\SAR_{00}(t)}=\frac{\SAR_{00}(t)-\SAR_{10}(t)}{\SAR_{00}(t)}
\]
First, we analyze the sign of $\VE_I^{net}(t)$ under a null true infectiousness effect, when the infection outcome is positively contagions and vaccine has a favorable effect prior first infection through $h_0(u|1)=\varepsilon h_0(u|0)$, for $\varepsilon \in [0,1)$.
\begin{equation}
\begin{split}
    & \SAR_{10}(t) - \SAR_{00}(t) \\
    &= \E[Y_i(t)|T_j<t,T_i>T_j,X_i=0,X_j=1] - \E[Y_i(t)|T_j<t,T_i>T_j,X_i=0,X_j=0] \\
    &= \frac{\int_0^t \E[Y_i(t)|W_j=u,W_i>u,\X=(0,1)](1-F_i(u|0)) dF_j(u|1)}{\Pr(W_j<t,W_i>W_j|\X=(0,1))}  \\
    & \qquad  - \frac{\int_0^t \E[Y_i(t)|W_j=u,W_i>u,\X=(0,0)](1-F_i(u|0))dF_j(u|0)}{\Pr(W_j<t,W_i>W_j|\X=(0,0))}  \\
\intertext{by applying the law of total probability}
    &= \int_0^t \E[Y_i(t)|W_j=u,W_i>u,\X=(0,1)] \frac{(1-F_i(u|0))dF_j(u|1)}{\int_0^t (1-F_i(v|0))dF_j(v|1)}  \\
    & \qquad  - \int_0^t \E[Y_i(t)|W_j=u,W_i>u,\X=(0,0)] \frac{(1-F_i(u|0))dF_j(u|0)}{\int_0^t (1-F_i(v|0))dF_j(v|0)} \\
    &= \int_0^t  \E[Y_i(t)|W_j=u,W_i>u,\X=(0,0)] \Big{[} \frac{(1-F_i(u|0))dF_j(u|1)}{\int_0^t (1-F_i(v|0))dF_j(v|1)} - \frac{(1-F_i(u|0))dF_j(u|0)}{\int_0^t (1-F_i(v|0))dF_j(v|0)}  \Big{]}. \\
\intertext{By $\IE(t,w_j,0)=0$ and Lemma \ref{le:consistency}}\end{split}
\label{SARDIFF}
\end{equation}
To ease the notation in Equation (\ref{SARDIFF}), we denote $\E[Y_i(t)|W_j=u,W_i>u,\X=(0,0)]=k(u)$. Denote $g(u|1)=\frac{(1-F_i(u|0))dF_j(u|1)}{\int_0^t (1-F_i(v|0))dF_j(v|1)}$ and $g(u|0)=\frac{(1-F_i(u|0))dF_j(u|0)}{\int_0^t (1-F_i(v|0))dF_j(v|0)}$, and $G(u|1)=\int_0^u g(s|1)ds$ and $G(u|0)=\int_0^u g(s|0)ds$. Then by integration by parts, 
(\ref{SARDIFF}) can be re-written as follows:
\begin{equation*}
\begin{split}
\SAR_{10}(t) - \SAR_{00}(t) &= \int_0^t k(u)[g(u|1)-g(u|0)] du \\
    & = k(u)[G(u|1)-G(u|0)] \Big|_0^t - \int_0^t (G(u|1)-G(u|0)) dk(u).\\
\end{split}
\label{decompositionSAR}
\end{equation*}
By their definitions, we have $G(0|1)-G(0|0)=0$ and $G(t|1)-G(t|0)=0$, and thus $k(u)[G(u|1)-G(u|0)] \Big|_0^t =0$. In other words, the sign of $\SAR_{10}(t) - \SAR_{00}(t)$ only depends on the sign of $G(u|1)-G(u|0)$ and $dk(u)$ for all $u>0$. First, we can show that $dk(u)<0$ for $0 \le u <t$. For $0 \le u <u' <t$, we have
\begin{equation}
\begin{split}
k(u) &= \frac{ \E[Y_i(t)|W_j=u,\X=(0,0)] - F_i(u|0)}{1-F_i(u|0)} \text{ by Theorem \ref{thm:id} }\\
&> \frac{ \E[Y_i(t)|W_j=u',\X=(0,0)] - F_i(u|0)}{1-F_i(u|0)} \text{ by $\CE(t,u,u',(0,0))>0$}\\
&= \frac{ \E[Y_i(t)|W_j=u',\X=(0,0)] - F_i(u'|0)+F_i(u'|0)-F_i(u|0)}{1-F_i(u|0)}\\
&=\frac{ k(u') (1-F_i(u'|0))+F_i(u'|0)-F_i(u|0)}{1-F_i(u|0)} \text{ by Theorem \ref{thm:id} }\\
& \ge \frac{ k(u') (1-F_i(u'|0))+(F_i(u'|0)-F_i(u|0)) k(u')}{1-F_i(u|0)} \text{ by $k(u')\le 1$}\\
& = \frac{ k(u') (1-F_i(u|0))}{1-F_i(u|0)}=k(u').
\end{split}
\label{decreasingk}
\end{equation}

Next, we analyze the property of $G(u|1)-G(u|0)$ for $\forall u>0$. Denote $H_0(u)=\int_0^u h_0(s|0)ds$. Given $h_0(u|1)=\varepsilon h_0(u|0)$ with $\varepsilon \in [0,1)$, we can write out $G(u|0)$ and $G(u|1)$ in terms of $h_0(u|0)$ as follows.
\begin{equation}
\begin{split}
G(s|1) &=\frac{\int_0^s (1-F_i(u|0)) dF_j(u|1) }{\int_0^t(1-F_i(v|0))dF_j(v|1)} = \frac{\int_0^s \varepsilon\cdot h_0(u|0) e^{-\varepsilon\cdot H_0(u)}e^{-H_0(u)} du}{\int_0^t \varepsilon\cdot h_0(v|0) e^{-\varepsilon\cdot H_0(v)}e^{-H_0(v)} dv} \\
    & =  \frac{\int_0^s \varepsilon\cdot h_0(u|0) e^{-(\varepsilon+1)\cdot H_0(u)} du}{\int_0^t \varepsilon\cdot h_0(v|0) e^{-(\varepsilon+1)\cdot H_0(v)} dv} = \frac{1-e^{-(\varepsilon+1)H_0(s)}}{1-e^{-(\varepsilon+1)H_0(t)}}\\
G(s|0) &= \frac{\int_0^s (1-F_i(u|0)) dF_j(u|1) } {\int_0^t (1-F_i(v|0))d F_j(v|1)} =  \frac{1-e^{-2H_0(s)}}{1-e^{-2H_0(t)}}\\
\end{split}
\label{dominance1}
\end{equation}
From (\ref{dominance1}), we observe that $G(s|1)$ and $G(s|0)$ only differ by the terms in front of $H_0$. Treat $G(s|1)$ and $G(s|0)$ as functions of $\varepsilon$, and we can re-express them as $G(\varepsilon)=\frac{1-e^{-(\varepsilon+1)H_0(s)}}{1-e^{-(\varepsilon+1)H_0(t)}}$ and $G(1)=\frac{1-e^{-2H_0(s)}}{1-e^{-2H_0(t)}}$, given $\varepsilon<1$. Then, if $G(\varepsilon)$ is a decreasing function of $\varepsilon$, we have $G(u|1) - G(u|0) \le 0$.
\begin{equation}
\frac{\partial}{\partial \varepsilon}G(\varepsilon) = \frac{H_0(u)e^{-(\varepsilon+1)H_0(u)}[1-e^{-(\varepsilon+1)H_0(t)}]-H_0(t)e^{-(\varepsilon+1)H_0(t)}[1-e^{-(\varepsilon+1)H_0(u)}]}{[1-e^{-(\varepsilon+1)H_0(t)}]^2}
\label{Gderivative}
\end{equation}
Divide the numerator of (\ref{Gderivative}) by a positive constant $H_0(t)H_0(u)e^{-(\varepsilon+1) [H_0(u)+H_0(t)]}$. We then have if $\frac{e^{(\varepsilon+1)H_0(u)}-1}{H_0(u)} 
\le \frac{e^{(\varepsilon+1)H_0(t)}-1}{H_0(t)}$ for $u<t$, then $G(u|1) - G(u|0) \le 0$. Treat $\frac{e^{(\varepsilon+1)H_0(t)}-1}{H_0(t)}$ as a function of $u$, given $0 \le u < t$. We have,
\begin{equation}
\begin{split}
\frac{\partial}{\partial u} \frac{e^{(\varepsilon+1)H_0(u)}-1}{H_0(u)} &= \frac{ (\varepsilon+1)H_0(u)e^{(\varepsilon+1)H_0(u)}-e^{(\varepsilon+1)H_0(u)}+1}{[H_0(u)]^2} \\
&= \frac{ (\varepsilon+1)H_0(u)-1+e^{-(\varepsilon+1)H_0(u)}}{[H_0(u)]^2 e^{(\varepsilon+1)H_0(u)}} \text{ by $e^{-(\varepsilon+1)H_0(u)} \ge 1-(\varepsilon+1)H_0(u)$ }
 \\
& \ge 0.
\end{split}
\label{Hderivative}
\end{equation}
Combining (\ref{Gderivative}) and (\ref{Hderivative}), we have $G(u|1) - G(u|0) \le 0$.

In summary, we can see that 
\begin{equation*}
\SAR_{10}(t) - \SAR_{00}(t) = k(u)[G(u|1)-G(u|0)] \Big|_0^t - \int_0^t (G(u|1)-G(u|0)) dk(u) <0
\end{equation*}

Thus, $\VE_I^{net}(t)=1-\frac{\SAR_{10}(t)}{\SAR_{00}(t)}=\frac{\SAR_{00}(t)-\SAR_{10}(t)}{\SAR_{00}(t)}>0$.

Second, we analyze the sign of $\VE_I^{net}(t)$ under a null true infectiousness effect, when the true susceptibility effect is also null.
\begin{equation*}
\begin{split}
& \SAR_{10}(t) - \SAR_{00}(t) \\
&= \int_0^t  \E[Y_i(t)|W_j=u,W_i>u,\X=(0,0)] \Big{[} \frac{(1-F_i(u|0))dF_j(u|1)}{\int_0^t (1-F_i(v|0))dF_j(v|1)} - \frac{(1-F_i(u|0))dF_j(u|0)}{\int_0^t (1-F_i(v|0))dF_j(v|0)}  \Big{]} \\
\intertext{by (\ref{SARDIFF})}
    &= \int_0^t  \E[Y_i(t)|W_j=u,W_i>u,\X=(0,0)] \Big{[} \frac{(1-F_i(u|0))dF_j(u|0)}{\int_0^t (1-F_i(v|0))dF_j(v|0)} - \frac{(1-F_i(u|0))dF_j(u|0)}{\int_0^t (1-F_i(v|0))dF_j(v|0)}  \Big{]} \\
   \intertext{by $\SE(t,w_j,x_j) = 0$ and (\ref{eq:seequal1})}
   &=0.
\end{split}
\end{equation*}
Thus, $\VE_I^{net}(t)=1-\frac{\SAR_{10}(t)}{\SAR_{00}(t)}=\frac{\SAR_{00}(t)-\SAR_{10}(t)}{\SAR_{00}(t)}=0$.

Third, we analyze the sign of $\VE_I^{net}(t)$ in the case of no contagion, when the true susceptibility effect is beneficial. First, $\CE(t,w_j,w'_j,\mathbf{0})=0$ for all $0 <w_j <w'_j$ implies $\IE(t,w_j,0)=0$.
\begin{equation*}
\begin{split}
\IE(t,w_j,x_i) &= \E[ Y_i(t;w_j,x_i,x_j=1) - Y_i(t;w_j,x_i,x_j=0)]  \\
&= \E[ \indicator{W_i(x_i)<t} - \indicator{W_i(t;x_i)<t}]=0
\end{split}
\end{equation*}
Following the same proof for the first case except replacing the second line of (\ref{decreasingk}) by an equal sign, we know $\VE_I^{net}(t)>0$.
\end{proof}
\begin{proof}[Proof of Theorem \ref{th:asy_infectiousness}]
Given $h^i_0(t|0)=0$, we have $F_i(s|0)=1-e^{-\int_0^s h^i_0(u|0)du}=0$ for $W_i(0)$.
\begin{equation}
\begin{split}
\E[Y_i(t;Y_j(x'_j),(0,x_j))|h^i_0(t|0)=0] &= \E[Y_i(t;Y_j(x'_j),(0,x_j))|W_i(0)=\infty]\\
\intertext{ by $F_i(s|x_i)=0$ for $\forall s>0$ }
&=\E[Y_i(t;\mathbf{1}\{W_j(x'_j)<t\},(0,x_j))|W_i(x_i)=\infty] \\
\intertext{given $Y_i(x'_j)=\mathbf\{T_j(x'_j)<t\}$ and $T_j(x'_j)=W_j(0)$ when $W_i(0)=\infty$}
&= \E[Y_i(t;W_j(x'_j),(0,x_j))|W_i(0)=\infty] \\
&= \E[Y_i(t;W_j(x'_j),(0,x_j))|h^i_0(s|0)=0] \\
\end{split}
\label{eq:replace}
\end{equation}
Thus, by the definition of $\VE_I(t)$ and $\IE(t,x_i)$, we have:
\begin{equation*}
\begin{split}
\VE_I(t) &= \E[Y_i(t;Y_j(1),(0,1))-Y_i(t;Y_j(1),(0,0))|h^i_0(s|0)=0]\\
&= \E[Y_i(t;W_j(1),(0,1))-Y_i(t;W_j(1),(0,0))|h^i_0(s|0)=0] = \IE(t,0|h^i_0(s|0)=0)
\end{split}
\end{equation*}
Thus, $\VE_I$ is equivalent to the natural infectiousness effect under the asymmetric partnership.
\end{proof}

\begin{proof}[Proof of Theorem \ref{th:asy_contagion}]
Given $h^i_0(t|0)=0$, we have $F_i(t|0)=1-e^{-\int_0^t h^i_0(u|0)du}=0$.
\begin{equation*}
\begin{split}
\VE_{\C}(t) &= \E[Y_i(t;Y_j(1),(0,0))] - \E[Y_i(t;Y_j(0),(0,0))] \\
&= \E[Y_i(t;W_j(1),(0,0))] - \E[Y_i(t;W_j(0),(0,0))] \\
\intertext{by Equation (\ref{eq:replace})}
&= \int_0^\infty \E[Y_i(t;w_j,(0,0))] dF_j(w_j|1) -\int_0^\infty \E[Y_i(t;w_j,(0,0)] dF_j(w_j|0)\\
\intertext{ by Corollary \ref{cor:idnatural}}
&= \int_0^\infty \big\{ F_i(w_j|0) + (1-F_i(w_j|0))\E[Y_i(t)|W_i>w_j,W_j=w_j,\X=(0,0)] \big\} d (F_j(w_j|1)-F_j(w_j|0))\\
\intertext{by Theorem \ref{thm:id}}
&= \int_0^\infty \E[Y_i(t)|W_i>w_j,W_j=w_j,\X=(0,0)] d (F_j(w_j|1)-F_j(w_j|0))\\
\intertext{ by $F_i(t|x_i)=0$ for $\forall t>0$ }
&= \int_0^t \E[Y_i(t)|W_i>w_j,W_j=w_j,\X=(0,0)] d (F_j(w_j|1)-F_j(w_j|0))\\
\intertext{since $\E[Y_i(t)|W_i>w_j,W_j=w_j,\X=(0,0)]=0$ for $w_j>t$}
&=  \int_0^t k(w_j) d (F_j(w_j|1)-F_j(w_j|0))\\
\intertext{ by the definition of k(u) in the proof of Theorem \ref{thm:VEInet}}
&= k(w_j)[F_j(w_j|1)-F_j(w_j|0)]\Big|_0^t - \int_0^t F_j(w_j|1)-F_j(w_j|0) dk(w_j)\\
\intertext{by integration by parts}
\end{split}
\end{equation*}
By the definition of $k(u)$ and $F_j(u|x_j)$, we know $k(t)=0$ and $F_j(0|1)-F_j(0|0)=0$, and thus $k(w_j)[F_j(w_j|1)-F_j(w_j|0)]\Big|_0^t=0$. If $\SE(t,w_j,x)>0$, we have $F_j(w_j|1)-F_j(w_j|0)>0$. If $\CE(t,u,u', (0,0))>0$ for $0 \le u <u'<t$, $dk(u)<0$ as shown in the the proof of Theorem \ref{thm:VEInet}. Thus, we have the following conclusions.

When $\SE(t,w_j,x)>0$, $\VE_{\C}(t)$ has the opposite sign as $\CE(t,u',u, (0,0))$.

If $\SE(t,w_j,x)=0$ and $\CE(t,u,u', (0,0))>0$, we have $\VE_{\C}(t)=0$.
\end{proof}

\end{document}